\numberwithin{equation}{section}
\theoremstyle{plain}
\newtheorem{thm}{Theorem}[section]
\newtheorem{prop}[thm]{Proposition}
\newtheorem*{thm*}{Theorem}
\theoremstyle{remark}
\newtheorem{rem}[thm]{Remark}
\providecommand{\D}{\mathbb}
\newcommand{\dd}{\mathrm{d}}
\newcommand{\ee}{\mathrm{e}}
\newcommand{\ii}{\mathrm{i}}
\providecommand{\abs}[1]{\lvert#1\rvert}
\providecommand{\accol}[1]{\lbrace#1\rbrace}
\providecommand{\croch}[1]{\lbrack#1\rbrack}
\renewcommand{\Im}{\operatorname{Im}}
\newcommand{\ord}{\mathrm{O}}
\newcommand{\osmall}{\mathrm{o}}
\renewcommand{\Re}{\operatorname{Re}}
\DeclareMathOperator{\diag}{diag}
\DeclareMathOperator{\one}{I}
\DeclareMathOperator{\Res}{Res}
\DeclareMathOperator{\two}{II}
\DeclareMathOperator{\three}{III}
\begin{document}
\title[RHP for Degasperis--Procesi]{A Riemann--Hilbert approach\\
for the Degasperis--Procesi equation}
\author[A.~Boutet de Monvel]{Anne Boutet de Monvel$^{\ast}$}
\author[D.~Shepelsky]{Dmitry Shepelsky$^{\dagger}$}
\address{$^{\ast}$%
Institut de Math\'ematiques de Jussieu-PRG,
Universit\'e Paris Diderot,
Avenue de France,
B\^at.~Sophie Germain,
case 7012,
75205 Paris Cedex 13,
France}
\email{aboutet@math.jussieu.fr}
\address{$^{\dagger}$%
Mathematical Division,
Institute for Low Temperature Physics,
47 Lenin Avenue,
61103 Kharkiv,
Ukraine}
\email{shepelsky@yahoo.com}
\subjclass[2010]{Primary: 35Q53; Secondary: 37K15, 35Q15, 35B40, 35Q51, 37K40}
\keywords{Degasperis--Procesi equation, Camassa--Holm equation, inverse scattering transform, Riemann--Hilbert problem, asymptotics}
\date{\today}
\begin{abstract}
We present an inverse scattering transform approach to the Cauchy problem on the line for the Degasperis--Procesi equation 
\[
u_t-u_{txx}+3\omega u_x+4uu_x=3u_xu_{xx}+uu_{xxx}
\]
in the form of an associated Riemann--Hilbert problem. This approach allows us to give a representation of the  solution to the Cauchy problem, which can be efficiently used in studying its long-time behavior.
\end{abstract}
\maketitle
\section{Introduction}                   \label{sec:intro}

In this paper we present the inverse scattering approach, based on an appropriate Riemann--Hilbert problem formulation, for the initial value problem for the Degasperis-Procesi (DP) equation \cite{DP99,DHH02}
\begin{align}                                               \label{DP-om}
&u_t-u_{txx}+3\omega u_x+4uu_x=3u_xu_{xx}+uu_{xxx},\quad -\infty<x<\infty,\ t>0,\\
&u(x,0)=u_0(x), \label{IC}
\end{align}
where $\omega$ is a positive parameter. The DP equation arises as a model equation describing the shallow-water approximation in inviscid hydrodynamics in the so-called ``moderate amplitude regime'': introducing two small parameters, the wave-amplitude parameter $\varepsilon$ (characterizing the smallness of the wave amplitude) and the long-wave parameter $\delta$ (characterizing the smallness of the typical wavelength with respect to the water depth), in this regime we assume that  $\delta\ll 1$ and $\varepsilon\sim\delta$. This regime can be characterized as to be more nonlinear than dispersive, which, in particular, allows ``wave breaking''. This  is in contrast with the so-called ``shallow water regime'' ($\delta\ll 1$ and $\varepsilon\sim\delta^2$), where nonlinearity and dispersion are so balanced that the solution of the initial value problem for the associated nonlinear equation (the Korteweg--de Vries equation) exists globally for all times, for all ``nice'' (sufficiently decaying and smooth) initial data.

Among the models of moderate amplitude regime, only two are integrable (admitting a bi-Hamiltonian structure and a Lax pair representation): they are the Camassa--Holm (CH) equation and the DP equation. Also, they are the only two integrable equations from the ``$b$-family'' of equations
\[
u_t-u_{txx}+ b \omega u_x+(b+1)uu_x=b u_xu_{xx}+uu_{xxx}.
\]
The CH and DP equations correspond to $b=2$ and $b=3$, respectively.

The analysis of the CH equation by using the inverse scattering approach was started in \cite{C01,CL03,L02}. A version of the inverse scattering method based on the Riemann--Hilbert (RH) factorization problem was proposed in \cite{BS06,BS08} (another Riemann--Hilbert formulation of the inverse scattering transform is presented in \cite{CGI06}). The RH approach has proved its efficiency in the study of the long-time behavior of solutions of both initial value problems \cite{BS-D,BKST,BIS10} and initial boundary value problems \cite{BS-F}. 
 
In the present paper we develop the Riemann--Hilbert approach to the DP equation, following the main ideas developed in \cite{BS08}. 
 
A major difference between the implementations of the Riemann--Hilbert method to the CH equation and the DP equation is that in the latter case, the spatial equation of the Lax pair is of the third order, which implies that in the matrix form one has to deal with $3\times 3$ matrix-valued equations, while in the case of the CH equation, they have a  $2\times 2$ matrix structure, as in the cases of the most known integrable equations (KdV, modified KdV, nonlinear Schr\"odinger, sine--Gordon, etc.) Hence, the construction and analysis of the associated RH problem become considerably more complicated.

In a recent paper \cite{CIL10}, the inverse scattering method for the DP equation based on a $3\times 3$ matrix RH problem in the spectral $k$-plane is proposed, where the solution of the DP equation is extracted from the large-$k$ behavior of the solution of the RH problem. Also, the dressing procedure is given for constructing $N$-soliton solutions, which is illustrated, particularly, by the explicit construction (in parametric form) of the $1$-soliton solutions. 
 
In our approach, we propose a different Riemann--Hilbert problem and give a different representation of the solution $u(x,t)$ of the initial value problem \eqref{DP-om} in terms of the solution of this RH problem evaluated at a distinguished point of the plane of the spectral parameter. Remarkably, the  formulae for $u(x,t)$ have the same structure as the parametric formulae for pure multisoliton solutions obtained in \cite{M05}.
 
\section{Lax pair and eigenfunctions}

We assume that the initial function $u_0(x)$ in \eqref{IC} is sufficiently smooth and  decay fast as $|x|\to\infty$. Letting $u(x,t)$ be the solution of the DP equation, we introduce the ``momentum'' variable 
\[
m=m(x,t)\equiv u-u_{xx}.
\]
It is known that, similarly to the case of the CH equation (see, e.g., \cite{C01}), the condition $m(x,0)+\omega>0$ for all $x$ provides the existence of $m(x,t)$  for all $t$; moreover, $m(x,t)+\omega>0$ for all $x\in\D{R}$ and all $t>0$. This justifies the form 
\begin{equation}\label{DP-b}
\left(
\sqrt[3]{m+\omega}\right)_t=- \left(u \sqrt[3]{m+\omega}\right)_x
\end{equation}
 of the DP equation, which will be used in our constructions below.
 
The  linear dispersion parameter $\omega$ can be scaled out to $\omega=1$. Hence, in what follows, for simplicity, we assume $\omega=1$.
 
\subsection{Lax pairs}   \label{lax.pairs}

\subsubsection*{Scalar Lax pair}

The DP equation admits a Lax representation: this equation is actually the compatibility condition of two linear equations \cite{DHH02}
\begin{subequations}\label{Lax-ini}
\begin{align} \label{Lax-ini-x}
\psi_{x}-\psi_{xxx}&=-z^3(m+1)\psi,\\  \label{Lax-ini-t}
\psi_t &= \Bigl(u_x-\frac{c}{z^3}\Bigr)\psi-u\psi_x +\frac{1}{z^3}\psi_{xx},
\end{align}
\end{subequations}
where $z$ is a spectral parameter, $\psi=\psi(x,t,z)$, and $c$ is an arbitrary constant. In what follows we will see that it is convenient to choose $c=\frac{2}{3}$.

\subsubsection*{1st matrix form}
In order to control the behavior of solutions to \eqref{Lax-ini} as functions of the spectral parameter $z$ (which is crucial for the Riemann--Hilbert method), it is convenient to rewrite the Lax pair in matrix form. Introducing $\Phi=\Phi(x,t,z)$ by
\[
\Phi=\begin{pmatrix}\psi\\ \psi_x \\ \psi_{xx}\end{pmatrix}
\]
transforms the scalar Lax pair \eqref{Lax-ini} into a Lax pair
\begin{subequations} \label{Lax-vec}
\begin{align} \label{Lax-vec-x}
\Phi_x &= U \Phi,\\ 
\label{Lax-vec-t}
\Phi_t &= V \Phi,
\end{align} 
where 
\begin{align}\label{U}
&U(x,t,z)=\begin{pmatrix}
0 & 1 & 0 \\
0 & 0 & 1 \\
z^3 q^3 & 1 & 0
\end{pmatrix},\\
\label{V}
&V(x,t,z)=\begin{pmatrix}
u_x-\frac{c}{z^3} & -u & \frac{1}{z^3} \\
u+1 & -\frac{c-1}{z^3} & -u \\
u_x-z^3 u q^3 & 1 & -u_x -\frac{c-1}{z^3}
\end{pmatrix},
\end{align}
with
\begin{equation}  \label{q}
q=q(x,t)=(m+1)^{1/3}. 
\end{equation}
\end{subequations}
From now on, \eqref{Lax-vec} will be seen as a $3\times 3$ matrix-valued linear system: a ``matrix'' solution of \eqref{Lax-vec} is a collection $\Phi=(\Phi^{(1)}\ \Phi^{(2)}\ \Phi^{(3)})$ of three ``vector'' solutions $\Phi^{(j)}$. Now we notice that if $c=\frac{2}{3}$, then $V$ is traceless. Thus in this case the determinant of a matrix solution to the equation $\Phi_t = V \Phi$ is independent of $t$. The analogous property of $\Phi_x = U \Phi$ is obvious.

The coefficient matrices $U$ and $V$ in \eqref{Lax-vec} have singularities (in the extended complex $z$-plane) at $z=0$ and at $z=\infty$. In order to control the large-$z$ behavior of solutions of \eqref{Lax-vec}, we follow a strategy similar to that adopted for the CH equation \cite{BS06, BS08}. We transform \eqref{Lax-vec} in such a way that: 
\begin{enumerate}[i)]
\item
the leading terms for $z\to\infty$ in the Lax equations be diagonal, whereas the terms of order $\ord(1)$ be off-diagonal; 
\item
all lower order terms (including those of order $\ord(1)$) vanish as $|x|\to \infty$. 
\end{enumerate}
It is instructive to perform this transformation in two steps:
\begin{enumerate}[(i)]
\item
First, transform \eqref{Lax-vec} into a system where the leading terms are represented as 
products of $(x,t)$-independent and $(x,t)$-dependent factors.
\item
Second, diagonalize the $(x,t)$-independent factors.
\end{enumerate}

\subsubsection*{2nd matrix form}

For the first step, introducing $\tilde\Phi=\tilde\Phi(x,t,z)$ by
\[
\tilde\Phi=D^{-1}\Phi,
\]
where
\[
D(x,t)=\diag\{q^{-1}(x,t),1,q(x,t)\}
\]
transforms \eqref{Lax-vec} into a new Lax pair
\begin{subequations} \label{Lax-vec-1}
\begin{align} \label{Lax-vec-1-x}
\tilde\Phi_x &= \tilde U \tilde\Phi,\\
\label{Lax-vec-1-t} 
\tilde\Phi_t &= \tilde V \tilde\Phi,
\end{align} 
where 
\begin{align}  \label{U-tilde}
\tilde U(x,t,z)
&=q(x,t)\begin{pmatrix}
0 & 1 & 0 \\
0 & 0 & 1 \\
z^3  & 1 & 0
\end{pmatrix} + \begin{pmatrix}
\frac{q_x}{q} & 0 & 0 \\
0 & 0 & 0 \\
0  & \frac{1}{q}-q & -\frac{q_x}{q}
\end{pmatrix}\notag\\
&\equiv q(x,t) U_\infty (z) + \tilde U^{(1)}(x,t)
\end{align}
and
\begin{align}
\tilde V(x,t,z)
&= - u q \begin{pmatrix}
0 & 1 & 0 \\
0 & 0 & 1 \\
z^3  & 1 & 0
\end{pmatrix} + \begin{pmatrix}
-\frac{2}{3z^3} & 0 & \frac{1}{z^3} \\
1 & \frac{1}{3 z^3} & 0 \\
0  & 1 & \frac{1}{3 z^3}
\end{pmatrix}\notag\\
&\quad + \begin{pmatrix}
-u\frac{q_x}{q} & 0 & 0 \\
\frac{u+1}{q}-1 & 0 & 0 \\
\frac{u_x}{q^2} & \frac{1}{q}-1+uq & u\frac{q_x}{q}
\end{pmatrix} + \frac{q^2-1}{z^3}\begin{pmatrix}
0 & 0 & 1 \\
0 & 0 & 0 \\
0  & 0 & 0
\end{pmatrix}\notag\\ 
\label{V-tilde}
&\equiv -u(x,t)q(x,t) U_\infty (z) + V_\infty (z) + \tilde V^{(1)}(x,t) + \frac{1}{z^3}\tilde V^{(2)}(x,t).
\end{align}
\end{subequations}

\subsubsection*{Main matrix form}

For the second step, it is important that the commutator of $U_\infty$ and $V_\infty$ vanishes identically, i.e., $[U_\infty, V_\infty]\equiv 0$, which allows simultaneous diagonalization of $U_\infty$ and $V_\infty$. Indeed, we have
\begin{subequations}\label{la-a}
\begin{align}\label{lambda}
P^{-1}(z)U_\infty (z)P(z)&=\Lambda(z),\\ 
\label{a}
P^{-1}(z)V_\infty (z)P(z)&=A(z),
\end{align}
\end{subequations}
where $I$ is the identity $3\times 3$ matrix,
\begin{subequations} \label{La-P}
\begin{align}\label{La}
\Lambda(z) &= \begin{pmatrix}
\lambda_1(z) & 0 & 0 \\
0 & \lambda_2(z) & 0 \\
0 & 0 & \lambda_3(z)
\end{pmatrix},\\
\label{Aa}
A(z)&=\frac{1}{3z^3}I+\Lambda^{-1}(z) 
\equiv
\begin{pmatrix}
A_1(z) & 0 & 0 \\
0 &A_2(z) & 0 \\
0 & 0 &A_3(z)
\end{pmatrix},\\
\label{P}
P(z) &= \begin{pmatrix}
 1 & 1 & 1 \\
\lambda_1(z) & \lambda_2(z) & \lambda_3(z) \\
\lambda_1^2(z) & \lambda_2^2(z) & \lambda_3^2(z)
\end{pmatrix},
\end{align}
and
\begin{equation} \label{La-P-1}
P^{-1}(z) = \left(\begin{smallmatrix}
 (3\lambda_1^2(z)-1)^{-1} & 0 & 0 \\
0 & (3\lambda_2^2(z)-1)^{-1} & 0 \\
0  & 0 & (3\lambda_3^2(z)-1)^{-1}
\end{smallmatrix}\right)
\left(\begin{smallmatrix}
 \lambda_1^2(z)-1 & \lambda_1(z) & 1 \\
 \lambda_2^2(z)-1 & \lambda_2(z) & 1\\
 \lambda_3^2(z)-1 & \lambda_3(z) & 1
\end{smallmatrix}\right).
\end{equation}
\end{subequations}
Here $\lambda_j(z)$, $j=1,2,3$ are the solutions of the algebraic equation 
\begin{equation}\label{la-eqn}
\lambda^3 - \lambda=z^3,
\end{equation}
so that $\lambda_j(z)\sim\omega^jz$ as $z\to\infty$, where $\omega=\ee^{\frac{2\ii\pi}{3}}$.

Thus, introducing $\hat\Phi=\hat\Phi(x,t,z)$ by
\[
\hat\Phi=P^{-1}\tilde\Phi
\]
leads to another Lax pair
\begin{subequations} \label{Lax-vec-2}
\begin{align} 
&\hat\Phi_x - q\Lambda(z)\hat\Phi = \hat U \hat\Phi,\\ 
&\hat\Phi_t + (uq\Lambda(z) - A(z))\hat\Phi = \hat V \hat\Phi,
\end{align} 
where 
\begin{align}
\hat U(x,t,z)&= P^{-1}(z) \tilde U^{(1)}(x,t) P(z),\\
\hat V(x,t,z)&= P^{-1}(z)\left(\tilde V^{(1)}(x,t)+\frac{1}{z^3}\tilde V^{(2)}(x,t)\right)P(z).
\end{align}
\end{subequations}

\subsubsection*{Commutator form}

Notice that $\hat U(x,t,z)=\ord(1)$ and $\hat V(x,t,z)=\ord(1)$ as $z\to\infty$ since $\tilde U^{(1)}$ and $\tilde V^{(1)}$ are lower triangular matrices. Moreover, it can be  checked directly that the diagonal entries of $\hat U(x,t,z)$ and $\hat V(x,t,z)$ are of order $\ord(1/z)$. This latter fact is important to establish the large-$z$ behavior of $\hat\Phi$ \cite{BC}.

On the other hand, $\hat U(x,t,z)=\osmall(1)$ and $\hat V(x,t,z)=\osmall(1)$ as $|x|\to\infty$, which suggests introducing a $3\times 3$ diagonal function $Q(x,t,z)$ that solves the system 
\begin{subequations} \label{Q-dif}
\begin{align} \label{Q-dif-x}
Q_x &=q\Lambda(z),\\ 
\label{Q-dif-t}
Q_t &=-uq\Lambda(z) + A(z),
\end{align}
\end{subequations}
as follows:
\begin{equation}\label{Q}
Q(x,t,z)=y(x,t)\Lambda(z)+tA(z)
\end{equation}
with
\begin{equation}\label{y}
y(x,t)=x-\int_x^\infty (q(\xi,t)-1)\dd\xi.
\end{equation}
The fact that the equations in \eqref{Q-dif} are consistent follows directly from the DP equation in the form \eqref{DP-b}: $q_t=-(uq)_x$. The normalization of $Q(x,t,z)$ is chosen in such a way that 
\begin{equation}\label{Q-inf}
Q(x,t,z)\sim x\Lambda(z)+tA(z)\quad\text{as }x\to+\infty.
\end{equation}
The role of $Q(x,t,z)$ in the construction of the Riemann--Hilbert problem is to 
catch the large-$z$ behavior of dedicated solutions of the Lax pair equations \eqref{Lax-vec-2}. Indeed, introducing the $3\times 3$ matrix-valued function $M=M(x,t,z)$ by
\[
M=\hat\Phi \ee^{-Q}
\]
reduces \eqref{Lax-vec-2}
to the system 
\begin{subequations} \label{M}
\begin{align} \label{M-x}
M_x - [Q_x, M] &= \hat U M,\\ 
\label{M-t}
M_t - [Q_t, M] &= \hat V M.
\end{align} 
\end{subequations}

\subsection{Eigenfunctions}   \label{eigenfunctions}

\subsubsection*{Fredholm integral equations}

Particular solutions of \eqref{M} having well-controlled properties as functions of the spectral parameter $z$ can be constructed as solutions of the Fredholm integral equation (cf.~\cite{BC})
\begin{align} \label{M-int}
&M(x,t,z)=\notag\\
&I + \int_{(x^*,t^*)}^{(x,t)} \ee^{Q(x,t,z)-Q(\xi,\tau,z)}
	\left( \hat U M(\xi,\tau,z)\dd\xi + \hat V M(\xi,\tau,z)\dd\tau\right)\ee^{-Q(x,t,z)+Q(\xi,\tau,z)},
\end{align}
where the initial points of integration $(x^*,t^*)$ can be chosen differently for different matrix entries of the equation: $Q$ being diagonal, \eqref{M-int} must be seen as the collection of scalar integral equations ($1\leq j,l\leq 3$)
\begin{align*}
&M_{jl}(x,t,z)=\\
&I_{jl}+\int_{(x_{jl}^*,t_{jl}^*)}^{(x,t)}\ee^{Q_{jj}(x,t,z)-Q_{jj}(\xi,\tau,z)}\bigl((\hat UM)_{jl}(\xi,\tau,z)\dd\xi+(\hat VM)_{jl}(\xi,\tau,z)\dd\tau\bigr)\ee^{-Q_{ll}(x,t,z)+Q_{ll}(\xi,\tau,z)}.
\end{align*}
Choosing the $(x_{jl}^*,t_{jl}^*)$ appropriately allows to obtain eigenfunctions that can be used in the construction of the Riemann--Hilbert problems associated with the initial value problems \cite{BS08} as well as the initial boundary value problems \cite{BS-F}. 
\begin{figure}[ht]
\centering\includegraphics[scale=1]{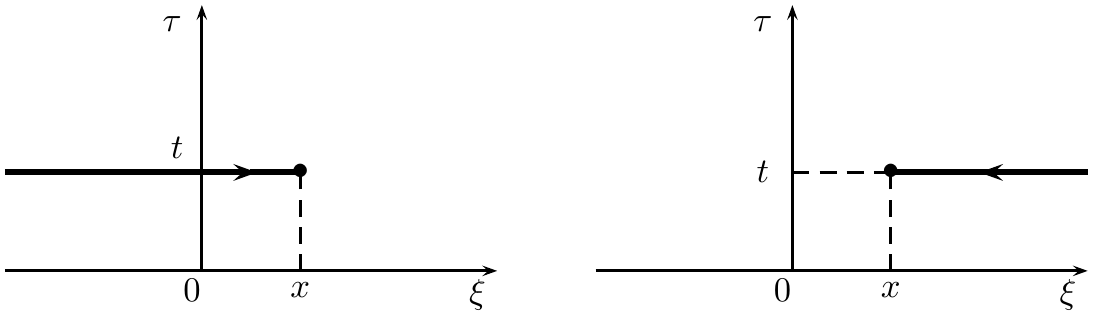}
\caption{Paths of integration. Left: $(x^*,t^*)=(-\infty,t)$. Right: $(x^*,t^*)=(+\infty,t)$.} 
\label{fig:paths}
\end{figure}

In particular, for the Cauchy problem considered in the present paper, it is reasonable to choose these points to be $(-\infty,0)$ or $(+\infty,0)$ thus reducing the integration in \eqref{M-int} to paths parallel to the $x$-axis (see Figure~\ref{fig:paths}) provided the integrals $\int_{(-\infty,0)}^{(-\infty,t)}$ and $\int_{(\infty,0)}^{(\infty,t)}$ vanish:
\begin{equation} \label{M-int1}
M(x,t,z)=I+\int_{(\pm)\infty}^x\ee^{Q(x,t,z)-Q(\xi,t,z)}\croch{\hat UM(\xi,t,z)}\ee^{-Q(x,t,z)+Q(\xi,t,z)}\dd\xi,
\end{equation}
or, in view of \eqref{Q},
\begin{equation} \label{M-int2}
M(x,t,z)=I+\int_{(\pm)\infty}^x\ee^{-(\int_x^{\xi}q(\zeta,t)\dd\zeta)\Lambda(z)}\croch{\hat UM(\xi,t,z)}\ee^{(\int_x^{\xi}q(\zeta,t)\dd\zeta)\Lambda(z)}\dd\xi.
\end{equation}
Since $q>0$, the domains (in the complex $z$-plane) where the exponential factors in \eqref{M-int1} are bounded are determined by the signs of the differences $\Re\lambda_j(z)-\Re\lambda_l(z)$, $1\leq j\neq l\leq 3$.

\subsubsection*{A new spectral parameter}
It is convenient to introduce a new spectral parameter $k$ (see \cite{CIL10} and also \cite{Ca82}) such that
\begin{equation} \label{z-k}
z(k)=\frac{1}{\sqrt{3}}\,k \left(1+\frac{1}{k^6}\right)^{1/3},
\end{equation}
and fixed by the condition $z(k)\sim\frac{1}{\sqrt{3}}\,k$ as $k\to\infty$. We fix $\lambda_j=\lambda_j(k)$ as follows
\begin{equation} \label{la-k}
\lambda_j(k)=\frac{1}{\sqrt{3}}  \left(\omega^j k +\dfrac{1}{\omega^j k}\right)\ \text{ where }\ \omega=\ee^{\frac{2\ii\pi}{3}}.
\end{equation}
\begin{figure}[ht]
\centering\includegraphics[scale=.8]{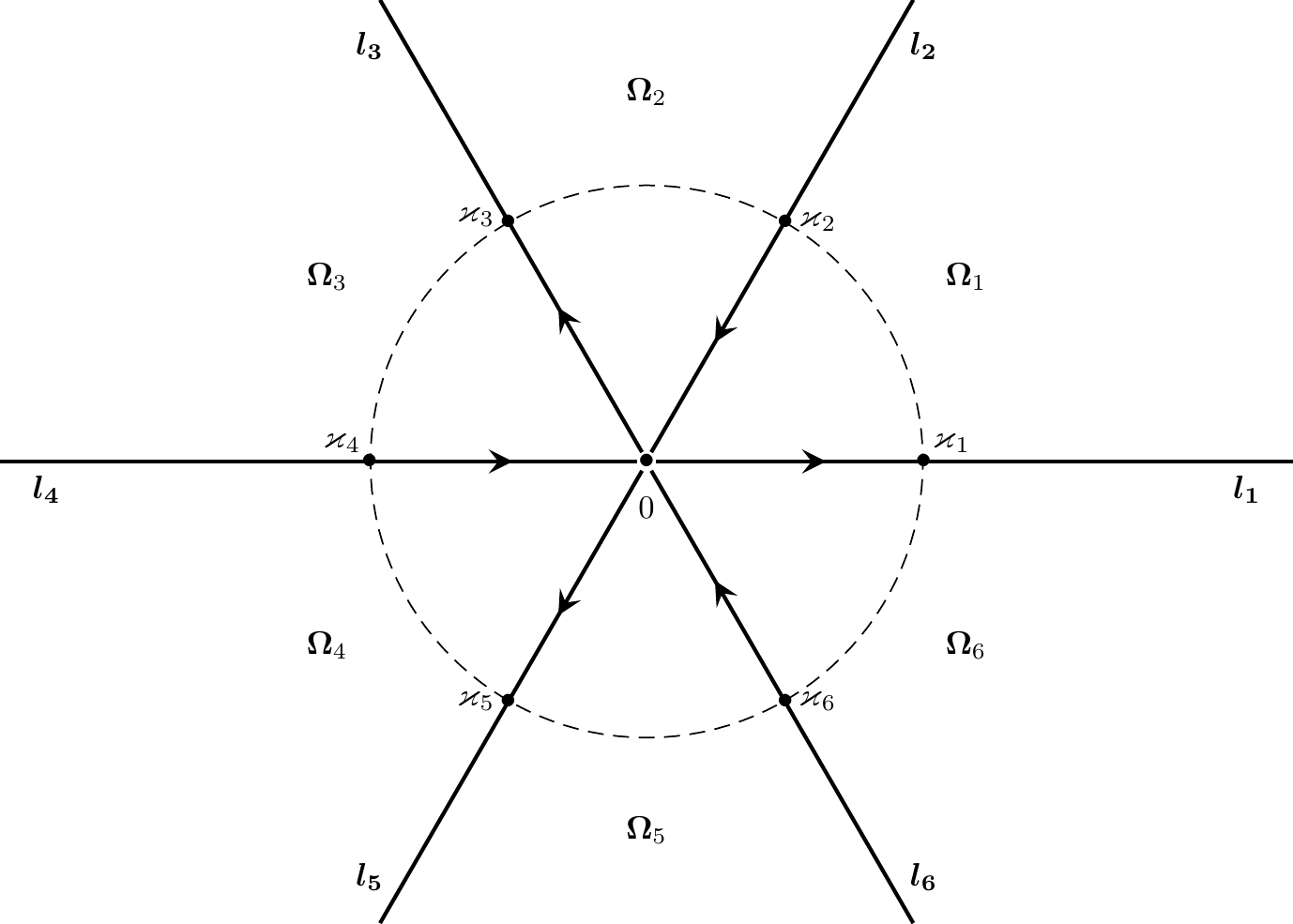}
\caption{Rays $l_{\nu}$, domains $\Omega_{\nu}$ and points $\varkappa_{\nu}$ in the $k$-plane} 
\label{fig:sectors}
\end{figure}

\noindent
The contour $\Sigma=\{k\mid\Re\lambda_j(k)=\Re\lambda_l(k)\text{ for some } j\neq l\}$
consists of six rays
\[
l_\nu=\D{R}_+\ee^{\frac{\ii\pi}{3}(\nu-1)}=\D{R}_+\varkappa_{\nu},\quad\nu=1,\dots,6
\]
dividing the $k$-plane into six sectors
\[
\Omega_\nu=\Bigl\lbrace k\,\Bigm\vert\,\frac{\pi}{3}(\nu-1)<\arg k < \frac{\pi}{3}\nu\Bigr\rbrace,\quad\nu=1,\dots,6.
\]
In order to have a (matrix-valued) solution to \eqref{M-int2} to be analytic in $\D{C}\setminus \Sigma$,
the initial points of integration $\infty_{jl}$ are specified for each matrix entry $(j,l)$, $1\leq j,l\leq 3$ as follows:
\begin{equation}\label{signs}
\infty_{jl}=
\begin{cases}
+\infty,&\text{if }\Re\lambda_j(z)\geq\Re\lambda_l(z),\\ 
-\infty, &\text{if }\Re\lambda_j(z)<\Re\lambda_l(z).
\end{cases}
\end{equation}
That means that we consider the system of Fredholm integral equations, for $1\leq j,l\leq 3$,
\begin{equation} \label{M-int3}
M_{jl}(x,t,z)=
I_{jl}+\int_{\infty_{j,l}}^x\ee^{-\lambda_j(z)\int_x^{\xi}q(\zeta,t)\dd\zeta}\croch{(\hat UM)_{jl}(\xi,t,z)}\ee^{\lambda_l(z)\int_x^{\xi}q(\zeta,t)\dd\zeta}\dd\xi.
\end{equation}

\begin{prop}[analyticity] \label{prop:p1}
Let $M(x,t,k)$ be the solution of the system of equations \eqref{M-int3}, where the limits of integration are chosen according to \eqref{signs}. Then,
\begin{enumerate}[\rm(i)]
\item
$M$ is piecewise meromorphic with respect to $\Sigma$, as function of the spectral parameter $k$. 
\item
$M(x,t,k)\to I$ as $k\to\infty$, where $I$ is the $3\times 3$ identity matrix.
\end{enumerate}
\end{prop}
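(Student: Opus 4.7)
The plan is a standard sector-by-sector Volterra iteration in the spirit of Beals--Coifman, driven by the sign structure that defines $\Sigma$.

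The first step is to unpack the geometric content of definition \eqref{signs}. By construction, $\Sigma$ is the locus where the ordering of $\Re\lambda_1(k)$, $\Re\lambda_2(k)$, $\Re\lambda_3(k)$ can change; hence on each open sector $\Omega_\nu$ the sign of $\Re(\lambda_j(k)-\lambda_l(k))$ is constant for every pair $j\neq l$, and the initial point $\infty_{jl}$ is the same for all $k\in\Omega_\nu$. Combined with $q>0$, this yields the crucial estimate
\[
\Bigl|\exp\bigl(-(\lambda_j(k)-\lambda_l(k))\textstyle\int_x^\xi q(\zeta,t)\,\dd\zeta\bigr)\Bigr|\leq 1
\]
along the integration path in \eqref{M-int3}, because the sign of $\xi-x$ (hence of $\int_x^\xi q\,\dd\zeta$) is opposite to that of $\Re(\lambda_j-\lambda_l)$.

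Next, I would set up the Neumann iteration for \eqref{M-int3}, viewed as a Volterra equation on a half-line (either $(-\infty,x]$ or $[x,+\infty)$ depending on $(j,l)$). The kernel norm is controlled by $\int\|\hat U(\xi,t,k)\|\,\dd\xi$. For $k$ in a compact subset of $\Omega_\nu$, the conjugating matrices $P(k)$ and $P^{-1}(k)$ are bounded (their singular points, where $3\lambda_j^2(k)=1$, lie at $k=\varkappa_\nu\in\Sigma$, outside $\Omega_\nu$), so this norm is majorized by a constant times $\|\tilde U^{(1)}(\cdot,t)\|_{L^1(\D{R})}$, which is finite since $q_x/q$ and $q-1$ inherit integrable decay from the assumptions on $u_0$. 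The Neumann series therefore converges absolutely and uniformly on compact subsets of $\Omega_\nu$, and its sum $M(x,t,k)$ is analytic (in particular meromorphic) in $k\in\Omega_\nu$. Gluing across $\nu=1,\dots,6$ yields (i).

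For (ii), as $k\to\infty$ in $\Omega_\nu$, the diagonal entries of $\hat U$ are $\ord(1/k)$ by the remark following \eqref{Lax-vec-2}, which, substituted into \eqref{M-int3} with the $L^1$-bound above, immediately yields $M_{jj}(x,t,k)-1=\ord(1/k)$. For the off-diagonal entries, the $\ord(1)$ size of $\hat U_{jl}$ is compensated by the exponential factor, and one integration by parts based on the identity
\[
\ee^{-(\lambda_j-\lambda_l)\int_x^\xi q\,\dd\zeta}=-\frac{1}{q(\xi,t)(\lambda_j-\lambda_l)}\,\partial_\xi\,\ee^{-(\lambda_j-\lambda_l)\int_x^\xi q\,\dd\zeta},
\]
combined with $|\lambda_j(k)-\lambda_l(k)|\asymp|k|$ inside the sector, yields $M_{jl}=\ord(1/k)$ as well. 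The main obstacle is precisely this off-diagonal large-$k$ estimate: the integration by parts produces extra factors involving $q_\xi$ and $\partial_\xi(\hat U M)$, and one must verify that these remain integrable and that the boundary terms at $\xi=\pm\infty$ vanish (which they do, by the decay of $\tilde U^{(1)}$ and the boundedness of $M$ established in the first part). Once this uniform estimate is in place, assertion (ii) follows.
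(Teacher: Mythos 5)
Your overall route is the same Beals--Coifman scheme the paper invokes, and your treatment of part (ii) --- the $\ord(1/k)$ diagonal of $\hat U$ plus one integration by parts against $\ee^{-(\lambda_j-\lambda_l)\int_x^\xi q\,\dd\zeta}$ for the off-diagonal entries --- is a sensible expansion of the paper's one-line remark. However, part (i) contains a genuine gap: the system \eqref{M-int3} is \emph{not} a Volterra system, and your Neumann-series argument proves too much. Fix $k$ in a sector and order the eigenvalues so that $\Re\lambda_{\sigma(1)}<\Re\lambda_{\sigma(2)}<\Re\lambda_{\sigma(3)}$. For the column $l=\sigma(2)$ the rule \eqref{signs} gives $\infty_{\sigma(1)\sigma(2)}=-\infty$ but $\infty_{\sigma(2)\sigma(2)}=\infty_{\sigma(3)\sigma(2)}=+\infty$, while the integrand $(\hat U M)_{j\sigma(2)}=\sum_m\hat U_{jm}M_{m\sigma(2)}$ couples all three entries; the value of this column at $x$ therefore depends on its values on \emph{all} of $\D{R}$, not on a half-line (the same happens for the column $\sigma(3)$). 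This is why the paper explicitly calls \eqref{M-int3} a system of \emph{Fredholm} integral equations. For such an operator the Neumann series converges only when the kernel norm is less than $1$, not merely when $\|\tilde U^{(1)}\|_{L^1}$ is finite; for large potentials it diverges, and the solution genuinely acquires poles in $\D{C}\setminus\Sigma$. Those poles are not a technicality: they are the soliton poles on which the residue-condition section and the Remark following Proposition~\ref{p3} rely. Your conclusion that $M$ is ``analytic (in particular meromorphic)'' in each sector would rule them out.

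The missing ingredient is the analytic Fredholm alternative: for $k$ ranging over a sector, $I-K(k)$ (with $K(k)$ the integral operator of \eqref{M-int3}) is an analytic family of Fredholm operators of index zero, so $(I-K(k))^{-1}$ exists and is analytic off the discrete zero set of the Fredholm determinant and extends meromorphically across it; this gives exactly ``piecewise meromorphic'' and no more. Your contraction argument does survive in a restricted form, and is in fact needed: for $|k|$ large the kernel norm \emph{is} small (the diagonal of $\hat U$ is $\ord(1/k)$ and the off-diagonal exponentials decay or oscillate rapidly), so there are no poles in a neighborhood of $k=\infty$; this is what legitimizes the boundedness of $M$ that your part-(ii) integration by parts quietly assumes.
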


\begin{proof}
The proof follows the same lines as in \cite{BC}. Notice that in order to have (ii) it is important that the diagonal part of $\hat U$  vanish as $k\to\infty$.
\end{proof}

\begin{prop}[symmetries] \label{p2}
$M(x,t,k)$ satisfies the symmetry relations:
\begin{enumerate}[\rm{(S}1)]
\item 
$\Gamma_1 \overline{M(x,t,\bar k)} \Gamma_1 = M(x,t,k)$ where $\Gamma_1 = 
\left(\begin{smallmatrix}
0 & 1 & 0 \\
1 & 0 & 0 \\
0 & 0 & 1
\end{smallmatrix}\right)$.
\item
$\Gamma_2 \overline{M(x,t,\bar k \omega^2)} \Gamma_2 = M(x,t,k)$ where $\Gamma_2 = 
\left(\begin{smallmatrix}
	0 & 0 & 1 \\
	0 & 1 & 0 \\
	1 & 0 & 0
\end{smallmatrix}\right)$.
\item 
$\Gamma_3 \overline{M(x,t,\bar k \omega)} \Gamma_3 = M(x,t,k)$ where $\Gamma_3 = 
\left(\begin{smallmatrix}
	1 & 0 & 0 \\
	0 & 0 & 1 \\
	0 & 1 & 0
\end{smallmatrix}\right)$.
\item 
$M(x,t,\frac{1}{k})=\overline{M(x,t,\bar k)}$.
\end{enumerate}
\end{prop}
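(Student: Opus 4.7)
The plan is to deduce each of (S1)--(S4) by a symmetry-to-uniqueness argument: for each of the four maps of the $k$-plane one verifies that the transformed object (namely $\Gamma_i\overline{M(x,t,\cdot)}\,\Gamma_i$, or $M(x,t,1/k)$ for (S4)) satisfies exactly the same Fredholm system \eqref{M-int3} as $M(x,t,k)$, and concludes by the uniqueness that underlies Proposition~\ref{prop:p1}.

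Everything reduces to tracking how the eigenvalues $\lambda_j(k)$ of \eqref{la-k} transform. A direct computation, with indices read modulo $3$ (convention $\lambda_0\equiv\lambda_3$), gives
\[
\overline{\lambda_j(\bar k)}=\lambda_{-j}(k),\qquad
\overline{\lambda_j(\bar k\omega^2)}=\lambda_{-j-2}(k),\qquad
\overline{\lambda_j(\bar k\omega)}=\lambda_{-j-1}(k),\qquad
\lambda_j(1/k)=\lambda_{-j}(k),
\]
so the four maps induce on $\{1,2,3\}$ the transpositions $(12)$, $(13)$, $(23)$, $(12)$---precisely the permutations represented by $\Gamma_1,\Gamma_2,\Gamma_3,\Gamma_1$. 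Since the columns of $P(z)$ in \eqref{P} are the Vandermonde vectors $(1,\lambda_j,\lambda_j^2)^{T}$, the same permutation law translates into $\overline{P(\bar k)}=P(k)\,\Gamma_1$, together with analogous identities involving $\Gamma_2,\Gamma_3,\Gamma_1$ for the remaining three maps, hence also $\overline{P^{-1}(\bar k)}=\Gamma_1\,P^{-1}(k)$, etc.

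The key structural observation is that $\tilde U^{(1)}(x,t)$, $\tilde V^{(1)}(x,t)$, and $\tilde V^{(2)}(x,t)$ in \eqref{U-tilde}--\eqref{V-tilde} are real matrices \emph{independent} of the spectral parameter, while $\tilde V^{(2)}$ enters $\hat V$ only via the scalar factor $1/z^3$, which is itself invariant under the four maps modulo complex conjugation (since $z(\bar k\,\omega^{\pm s})=\overline{z(k)}$ for $s=0,1,2$ and $z(1/k)=z(k)$). Combined with the conjugation rules for $P$, this yields the coefficient symmetries
\[
\Gamma_1\,\overline{\hat U(x,t,\bar k)}\,\Gamma_1=\hat U(x,t,k),\qquad
\Gamma_1\,\overline{\hat V(x,t,\bar k)}\,\Gamma_1=\hat V(x,t,k),
\]
and their three analogues. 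The specification \eqref{signs} of the integration limits is also preserved, because $\Re\overline{\lambda_j(\bar k)}=\Re\lambda_{-j}(k)$ (and similarly for the other maps), so each map carries the inequalities $\Re\lambda_j\gtrless\Re\lambda_l$ into the corresponding inequalities between the permuted indices.

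With these three ingredients in place, substituting the transformed spectral variable into \eqref{M-int3} and conjugating by the appropriate $\Gamma_i$ produces, in each case, exactly the defining Fredholm system for $M(x,t,k)$, and the Fredholm uniqueness gives (S1)--(S3). Relation (S4) follows identically, with the simplification that $k\mapsto 1/k$ and the conjugated map $k\mapsto\bar k$ induce the \emph{same} permutation $(12)$ of the $\lambda_j$ and give the \emph{same} kernel $\Gamma_1\hat U(x,t,k)\Gamma_1$, so that $M(x,t,1/k)$ and $\overline{M(x,t,\bar k)}$ satisfy one and the same integral equation. The main obstacle is not conceptual but clerical: for each of the four maps one must simultaneously reconcile the permutation of the $\lambda_j$, the conjugation rule for $\hat U$ and $\hat V$, and the indexation of the integration limits---each ingredient is elementary, but one has four symmetries and must check all three ingredients consistently for each.
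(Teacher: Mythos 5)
Your argument is correct, and it is the natural (and surely intended) one: the paper states Proposition~\ref{p2} without proof, and the standard route is exactly what you do --- show that each transformed function satisfies the same Fredholm system \eqref{M-int3}, with the same kernels and the same choice \eqref{signs} of integration limits, and invoke uniqueness. Your three ingredients all check out: the permutation laws $\overline{\lambda_j(\bar k\omega^s)}=\lambda_{-j-s}(k)$ and $\lambda_j(1/k)=\lambda_{-j}(k)$ do give the transpositions $(12)$, $(13)$, $(23)$, $(12)$ matching $\Gamma_1,\Gamma_2,\Gamma_3,\Gamma_1$; the reality and $k$-independence of $\tilde U^{(1)}$, $\tilde V^{(1)}$, $\tilde V^{(2)}$ together with $z^3(\bar k\omega^s)=\overline{z^3(k)}$ and $z^3(1/k)=z^3(k)$ give the conjugation rules for $\hat U$, $\hat V$ via $\overline{P(\bar k)}=P(k)\Gamma_1$; and $\Re\overline{\lambda_j(\bar k)}=\Re\lambda_{\sigma(j)}(k)$ preserves the triangularity prescription. (For \eqref{M-int3} itself only the $\hat U$ symmetry is needed; the $\hat V$ symmetry is a harmless extra.)
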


From (S1)-(S3) it follows that the values of  $M$ at $k$ and at $\omega k$ are related by
\[
M( x,t,k \omega) = C^{-1}M(x,t,k)C,\ \text{ where }\ 
C=\left(\begin{smallmatrix}
	0 & 0 & 1 \\
	1 & 0 & 0 \\
	0 & 1 & 0
\end{smallmatrix}\right).
\]

If $\lambda_j(k)=\lambda_l(k)$, $j\neq l$ for certain values of the spectral parameter $k$, then $P$ at these values becomes degenerate (see \eqref{La-P-1}), which in turn leads to singularities for $\hat U$ and, consequently, for $\hat\Phi$ and $M$. These values are $\varkappa_{\nu}=\ee^{\frac{\ii\pi}{3}(\nu-1)}$, $\nu=1,\dots,6$. Taking into account the symmetries described in Proposition \ref{p2} leads to the following

\begin{prop}[singularities] \label{p3}
The limiting values of $M(x,t,k)$ as $k$ approaches one of the points $\varkappa_\nu=\ee^{\frac{\ii\pi}{3}(\nu-1)}$, $\nu=1,\dots,6$ have pole singularities with leading terms of a specific matrix structure. 
\begin{enumerate}[\rm(i)]
\item
As $k\to\varkappa_1=1$,
\begin{equation}\label{singul-1}
M\sim \frac{1}{k-1}
\begin{pmatrix}
	1 & 1 & 1 \\
	-1 & -1 & -1 \\
	0 & 0 & 0
\end{pmatrix}
\begin{pmatrix}
	\alpha & 0 & 0 \\
	0 & \alpha & 0 \\
	0 & 0 & \beta
\end{pmatrix},
\end{equation} 
where $\alpha=-\overline{\alpha}$ and $\beta=-\overline{\beta}$.
\item
As $k\to\varkappa_2=\ee^{\frac{\ii\pi}{3}}$,
\begin{equation}\label{singul-2}
M\sim \frac{1}{k-\ee^{\frac{\ii\pi}{3}}}
\begin{pmatrix}
	0 & 0 & 0 \\
		1 & 1 & 1  \\
		-1 & -1 & -1
\end{pmatrix}
\begin{pmatrix}
	 \tilde\beta & 0 & 0 \\
	   0 & \tilde\alpha & 0 \\
	   0 & 0 & \tilde\alpha 
\end{pmatrix},
\end{equation} 
where $\tilde\alpha=-\omega\overline{\tilde\alpha}$ and $\tilde\beta=-\omega\overline{\tilde\beta}$.
\end{enumerate}
$M$ has similar leading terms at the other polar singularities $\varkappa_3,\dots,\varkappa_6$ in accordance with the symmetry conditions stated in Proposition \ref{p2}.
\end{prop}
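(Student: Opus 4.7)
The pole at $\varkappa_\nu$ originates from the algebraic degeneracy of the diagonalizing matrix $P(k)$, which the analyticity of the \emph{original} Lax solution $\Phi$ then transfers to $M$. Concretely, I would first verify that $P^{-1}(k)$ has simple poles at the six points $\varkappa_\nu$ and compute their residues. At $\varkappa_1=1$, the coincidence $\lambda_1(1)=\lambda_2(1)=-1/\sqrt{3}$ causes the factors $(3\lambda_j^2(k)-1)$, $j=1,2$, appearing in \eqref{La-P-1} to vanish simply: from $\lambda_j'(1)=\pm\ii$ one finds $3\lambda_{1,2}^2-1=\mp 2\ii\sqrt{3}(k-1)+\ord((k-1)^2)$. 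Combined with the fact that the first two rows of the second factor in \eqref{La-P-1} both equal $(-\tfrac{2}{3},-\tfrac{1}{\sqrt{3}},1)$ at $k=1$, a direct calculation yields the rank-one formula
\[
\Res_{k=1}P^{-1}(k)=\frac{\ii}{2\sqrt{3}}\begin{pmatrix}1\\-1\\0\end{pmatrix}\bigl(-\tfrac{2}{3},\ -\tfrac{1}{\sqrt{3}},\ 1\bigr).
\]
An analogous computation at $\varkappa_2=\ee^{\ii\pi/3}$, using $\lambda_2(\varkappa_2)=\lambda_3(\varkappa_2)=1/\sqrt{3}$, produces a rank-one residue whose column is proportional to $(0,1,-1)^{T}$.

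Next, I would transfer this structure to $M$. Since the scalar Lax coefficients \eqref{U}, \eqref{V} are polynomial in $z^3$ and $z(k)$ is analytic away from $k=0,\infty$, any fundamental matrix $\Phi(x,t,z(k))$ of \eqref{Lax-vec} is analytic in $k$ near each $\varkappa_\nu$. Writing $M=P^{-1}D^{-1}\Phi\,\ee^{-Q}$ and noting that $D^{-1}(x,t)$ and $\ee^{-Q(x,t,k)}$ are regular in $k$, the simple pole and column structure of $\Res P^{-1}$ are inherited by $M$. Thus near $k=1$,
\[
M(x,t,k)=\frac{1}{k-1}\begin{pmatrix}1\\-1\\0\end{pmatrix}\bigl(\alpha_1,\alpha_2,\alpha_3\bigr)(x,t)+\ord(1),
\]
and analogously at $\varkappa_2$ with column $(0,1,-1)^{T}$.

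The delicate remaining task is to identify the specific row groupings $(\alpha,\alpha,\beta)$ and $(\tilde\beta,\tilde\alpha,\tilde\alpha)$ together with the stated reality conditions. At $\varkappa_1=1$, the symmetry (S4) applied to the Laurent expansion (using $1/k=1-(k-1)+\ord((k-1)^2)$) forces $\Res_{k=1}M=-\overline{\Res_{k=1}M}$, so every $\alpha_j$ is purely imaginary. Combining this with (S1), which at $k=1$ reduces to $\Gamma_1\,\Res_{k=1}M\,\Gamma_1=-\Res_{k=1}M$, and evaluating on the rank-one ansatz produces the single additional scalar relation $\alpha_1=\alpha_2$. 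Setting $\alpha:=\alpha_1=\alpha_2$ and $\beta:=\alpha_3$ recovers \eqref{singul-1} with the desired reality. At $\varkappa_2$ the analogous self-symmetry is (S3), since $\overline{\varkappa_2}\,\omega=\varkappa_2$; its residue form is $\omega\,\Gamma_3\overline{\Res_{\varkappa_2}M}\Gamma_3=\Res_{\varkappa_2}M$, which on the rank-one ansatz yields $\tilde\beta=-\omega\overline{\tilde\beta}$ together with $b_2=-\omega\overline{b_3}$, $b_3=-\omega\overline{b_2}$. The missing equality $b_2=b_3$ is then extracted by cross-referencing with the relations (S1), (S2), (S4) — each of which links $\Res_{\varkappa_2}M$ to $\Res_{\varkappa_6}M$ or $\Res_{\varkappa_4}M$; together they supply the one extra scalar constraint needed.

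Finally, the singular structures at $\varkappa_3,\ldots,\varkappa_6$ follow mechanically by pushing the results at $\varkappa_1,\varkappa_2$ through the symmetries of Proposition \ref{p2}, since these symmetries permute the $\varkappa_\nu$'s. The main obstacle is the third paragraph: the rank-one $P^{-1}$-analysis alone yields a generic row vector $(\alpha_1,\alpha_2,\alpha_3)$, and only the careful bookkeeping of which symmetries fix a pole (as opposed to mapping it elsewhere) — namely (S1)+(S4) at $\varkappa_1$ and (S3) reinforced by the cross-point relations at $\varkappa_2$ — produces the two-out-of-three equality that makes the residue fit the diagonal $\diag(\alpha,\alpha,\beta)$ pattern displayed in \eqref{singul-1}--\eqref{singul-2}.
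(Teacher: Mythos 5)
Your opening computation is sound and matches the paper's first step: $P^{-1}(k)$ has a simple pole at $\varkappa_1=1$ with rank-one residue proportional to the column $(1,-1,0)^{T}$ (indeed your expansion $3\lambda_{1,2}^2-1=\mp2\ii\sqrt{3}(k-1)+\ord((k-1)^2)$ is the correct one), and your use of (S4) to get that each residue entry is purely imaginary is also fine. The gap is in the step that produces the row pattern $(\alpha,\alpha,\beta)$. The point $\varkappa_1=1$ lies \emph{on} the jump contour $l_1$, so $M$ has two distinct boundary values there and hence two a priori distinct residues $R_{\pm}=(1,-1,0)^{T}(\alpha_1^{\pm},\alpha_2^{\pm},\alpha_3^{\pm})$. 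The map $k\mapsto\bar k$ in (S1) exchanges the two sides of $l_1$ near $k=1$, so applied to the Laurent expansions it yields the cross-relations $\alpha_1^{+}=-\overline{\alpha_2^{-}}$, $\alpha_2^{+}=-\overline{\alpha_1^{-}}$, not the single-branch identity $\Gamma_1\Res M\,\Gamma_1=-\Res M$ that you evaluate. Combined with pure imaginarity this gives $\alpha_1^{+}=\alpha_2^{-}$ and $\alpha_2^{+}=\alpha_1^{-}$, which does not imply $\alpha_1^{+}=\alpha_2^{+}$ unless you separately prove $R_{+}=R_{-}$ (a nontrivial statement about the jump matrix at $k=1$ that you do not address). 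In fact the only symmetry fixing $\varkappa_1$ \emph{and} preserving the side of the contour is $k\mapsto1/\bar k$, and it yields exactly the reality condition and nothing more; so the two-out-of-three equality cannot be extracted from the symmetries alone. The same objection applies at $\varkappa_2$ with (S3) in place of (S1), where moreover you only assert, without carrying out, that ``cross-referencing'' (S1), (S2), (S4) supplies the missing relation $b_2=b_3$.

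The paper obtains the equality $\alpha_1=\alpha_2$ by a different mechanism, which is the idea missing from your argument: the solution selected by the integral equations \eqref{M-int3} corresponds to $\tilde\Phi$ satisfying the boundary condition $\tilde\Phi\sim P(k)\,\ee^{y\Lambda(k)+tA(k)}$ as $x\to+\infty$; since the first two columns of $P(k)$ coincide at $k=1$ (because $\lambda_1(1)=\lambda_2(1)$) and $Q_{11}=Q_{22}$ there, the first two columns of $\tilde\Phi$ coincide in the limit $k\to1$, separately for each side of $l_1$. Applying the rank-one residue of $P^{-1}$ (a fixed row functional acting on the columns of $\tilde\Phi$) then forces the first two entries of the residue row vector to be equal for each one-sided limit, after which (S1) and (S4) give the reality conditions exactly as you indicate. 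Relatedly, your identity $M=P^{-1}D^{-1}\Phi\,\ee^{-Q}$ with ``$\Phi$ analytic near $\varkappa_\nu$'' is not available for free: the particular solution picked out by \eqref{M-int3} is normalized through $P(k)$ and through the $k$-dependent choice of endpoints \eqref{signs}, so both its regularity at $\varkappa_\nu$ and the coincidence of the relevant columns are precisely what the boundary-condition argument must (and does) establish.
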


\begin{proof}
Indeed, consider, for example, the behavior of $M$ as $k\to\varkappa_1=1$. We have $\lambda_1(1)=\lambda_2(1)=-1/\sqrt{3}$ and  $\lambda_3(1)=2/\sqrt{3}$. Moreover, as $k\to 1$, we have
\begin{align*}
3\lambda_1^2-1 &= -2\sqrt{3}(k-1) + \ord\left((k-1)^2\right), \\
3\lambda_2^2-1 &= 2\sqrt{3}(k-1) + \ord\left((k-1)^2\right), \\
3\lambda_3^2-1 &=3.
\end{align*}
Consequently, as $k\to\varkappa_1=1$,
\[
P^{-1}(k) = \frac{1}{k-1}\begin{pmatrix}
1 & 1 & 1 \\
-1 & -1 & -1 \\
0 & 0 & 0
\end{pmatrix}
\begin{pmatrix}
p_1 & 0 & 0 \\
0 & p_2 & 0 \\
0 & 0 & p_3
\end{pmatrix} + \ord(1)
\]
with some $p_j$'s.

Now recall that $\hat\Phi = P^{-1}\tilde\Phi$, where, on one hand, $\tilde\Phi$ satisfies a differential equation whose coefficients are regular at $k=1$, and on the other hand, $\tilde\Phi$ satisfies the boundary condition (see \eqref{M-int2} with signs chosen according to \eqref{signs})
\[
\tilde\Phi\sim P(k)\ee^{y(x,t)\Lambda(k)+t A(k)},\quad x\to +\infty
\]
for all $k$ such that $\lambda_j(k)\neq\lambda_l(k)$ for all $j\neq l$. It follows that when $k$ approaches $\varkappa_1=1$ from the either side of $l_1$, the first two columns of $\tilde\Phi$ coincide. Consequently, for the limiting values of $\hat\Phi$ (and thus of $M$) we obtain \eqref{singul-1} while the property  $\alpha=-\overline{\alpha}$ and $\beta=-\overline{\beta}$ follows from the symmetries (S1) and (S4) of Proposition \ref{p2}. 

Similarly for the other points $\varkappa_{\nu}=\ee^{\frac{\ii\pi}{3}(\nu-1)}$, $\nu=2,\dots,6$.
\end{proof}

\begin{rem}
While the set of singularities of $M(x,t,k)$ in the open set 
\[
\Omega=\D{C}\setminus\Sigma=\Omega_1\cup\dots\cup\Omega_6
\]
can be empty (for instance, this happens for all sufficiently small ``potentials'' $u(x,t)$), the singularities described in Proposition \ref{p3} are generic. This should be compared with \cite{CIL10}, where the solutions of a system of integral equations similar to \eqref{M-int2} are combined into a Riemann--Hilbert problem under the assumption that they have no singularities.
\end{rem}

\section{Riemann--Hilbert problems}
\subsection{Jump conditions}

\subsubsection*{Matrix RH problem}

For $k$ on the boundary between two adjacent domains $\Omega_\nu$, the limiting values of $M$, being the solutions of the system of differential equations \eqref{M} must be related by a matrix independent of $x$ and $t$. Supplying the rays $l_\nu$ with the orientation, see Figure~\ref{fig:sectors}, we can write for the limiting values of $M$:
\begin{equation}\label{scat}
M_+(x,t,k)=M_-(x,t,k)\ee^{Q(x,t,k)}S_0(k)\ee^{-Q(x,t,k)},\quad k\in l_1\cup\dots\cup l_6.
\end{equation}
Considering \eqref{scat} at $t=0$ we see that $S_0(k)$ is, in fact, determined by $u(x,0)$, i.e., by the initial data for the Cauchy problem \eqref{DP-om}--\eqref{IC}, via the solutions $M(x,0,k)$ of the system \eqref{M-int3} whose coefficients are determined by $u(x,0)$. Thus the relation \eqref{scat} can be considered as a ``pre-Riemann--Hilbert problem'' associated with \eqref{DP-om}: the data are $S_0(k)$, and we seek for a piecewise meromorphic function $M$ satisfying \eqref{scat} for all $x$ and $t$, in the hope that one can further extract the solution $u(x,t)$ to \eqref{DP-om} from $M(x,t,k)$. 

The matrix $S_0(k)$ has a particular matrix structure, cf.~\cite{CIL10}. Indeed, the integral equations \eqref{M-int3} allow studying the limiting values of $M$ as $x\to\pm\infty$. 

\subsubsection*{$S_0(k)$ for $k\in l_\nu$}

Set $t=0$ and consider, for example, the limiting values of $M_\pm(x,0,k)$ for $k\in l_1=\D{R}_+$. For such $k$, $\Re\lambda_1(k)=\Re\lambda_2(k)<\Re\lambda_3(k)$, and the triangular structure of integration in \eqref{M-int3} (see \eqref{signs}) implies that, as $x\to+\infty$,
\begin{equation}\label{lims-p}
M_+(x,0,k)=\begin{pmatrix}
1 & r_+(k){E}(x,k) & 0\\
0 & 1 & 0\\
0 & 0 & 1
\end{pmatrix} 
+\osmall(1)
\end{equation}
and 
\begin{equation}\label{lims-m}
M_-(x,0,k)=\begin{pmatrix}
1 & 0 & 0\\
r_-(k){E}^{-1}(x,k) & 1 & 0\\
0 & 0 & 1
\end{pmatrix}
+\osmall(1),
\end{equation}
where ${E}(x,k)=\ee^{Q_{11}(x,0,k)-Q_{22}(x,0,k)}=\ee^{y(x,0)(\lambda_1(k)-\lambda_2(k))}$, whereas $M_+$ and $M_-$ are bounded for all $x$ (particularly, as $x\to\infty$). The symmetry (S1) from Proposition \ref{p2} implies that $r_-(k)=\overline{r_+(k)}$.

Letting $x\to+\infty$ in the r.h.s.\ of 
\[
S_0(k)=\ee^{-Q(x,0,k)}M_-^{-1}(x,0,k)M_+(x,0,k)\ee^{Q(x,0,k)}
\]
we get that $(S_0)_{31}(k)=(S_0)_{32}(k)\equiv 0$ and $(S_0)_{11}(k)=(S_0)_{33}(k)\equiv 0$. On the other hand, letting $x\to-\infty$ yields $(S_0)_{13}(k)=(S_0)_{23}(k)\equiv 0$.
Thus for $k\in l_1$
\begin{equation}\label{S-SS-1}
S_0(k)= 
\begin{pmatrix}
1 & 0 & 0\\
-r(k) & 1 & 0\\
0 & 0 & 1
\end{pmatrix}
\begin{pmatrix}
1 & \overline{r(k)} & 0\\
 0 & 1 & 0\\
0 & 0 & 1
\end{pmatrix},
\end{equation}
where $r(k):=r_-(k)=\overline{r_+(k)}$, whereas the symmetry (S4) implies that $r(1/k)=\overline{r(k)}$. 

Similarly, for $k\in l_4=\D{R}_-$, the matrix $S_0(k)$ has the same form:
\begin{equation}\label{S-SS-4}
S_0(k)=\left(G_4^-(k)\right)^{-1}G_4^+(k)= 
\begin{pmatrix}
1 & 0 & 0\\
-r(k) & 1 & 0\\
0 & 0 & 1
\end{pmatrix}
\begin{pmatrix}
1 & \overline{r(k)} & 0\\
 0 & 1 & 0\\
0 & 0 & 1
\end{pmatrix},
\end{equation}
whereas the construction of $S_0(k)$ for $k\in l_{\nu}$ with $\nu\neq 1,4$ follows from the symmetries of Proposition \ref{p2}. 

Thus, similarly to the cases of, say, the KdV equation or the Camassa--Holm equation, the jump matrix on the whole contour is determined by a scalar function --- the reflection coefficient $r(k)$ given for $k\in\D{R}$. In turn, as it follows from \eqref{lims-p} and \eqref{M-int3} for $t=0$, the reflection coefficient is determined by the initial condition $u_0(x)$ via the solution of \eqref{M-int3}, where $m(x,0)\equiv u(x,0)-u_{xx}(x,0)$ in the construction of $\hat U(x,0,z)$ is replaced by $m_0(x)\equiv u_0(x)-u_0''(x)$.

\subsubsection*{New matrix RH problem (after change of parameter)}
The dependence  of the matrix $\ee^{Q}S_0\ee^{-Q}$ relating $M_+$ and $M_-$ in \eqref{scat} on the parameters $x$ and $t$ suggests introducing the parameter $y=y(x,t)$ given by \eqref{y} and thus rewriting \eqref{scat} as 
\begin{subequations}
\begin{align}\label{RH-y}
\hat M_+(y,t,k)
&=\hat M_-(y,t,k)\ee^{y\Lambda(k)+tA(k)}S_0(k)\ee^{-y\Lambda(k)-tA(k)}\notag\\
&=\hat M_-(y,t,k)S(y,t,k)
\shortintertext{with}
S(y,t,k)&=\ee^{y\Lambda(k)+tA(k)}S_0(k)\ee^{-y\Lambda(k)-tA(k)},
\end{align}
\end{subequations}
so that 
\[
M(x,t,k)=\hat M(y(x,t),t,k),
\]
which provides explicit dependence of the relating matrix on the parameters $(y,t)$. 

\subsubsection*{Vector RH problem}
On the other hand, the particular matrix structure of the singularities at $k=\varkappa_{\nu}\equiv\ee^{\frac{\ii\pi}{3}(\nu-1)}$ suggests introducing a row vector-valued $1\times 3$ Riemann--Hilbert problem 
\begin{subequations}  \label{RH}
\begin{align}  \label{RH-y-v}
\mu_+(y,t,k)
&=\mu_-(y,t,k)S(y,t,k)
\shortintertext{where as above}
S(y,t,k)&=\ee^{y\Lambda(k)+tA(k)}S_0(k)\ee^{-y\Lambda(k)-tA(k)},\notag
\end{align}
having the same jump conditions across $\Sigma$ as in the matrix case, but with the following normalization condition at $k=\infty$:
\begin{equation}  \label{vec-norm}
\mu(y,t,k)=\begin{pmatrix}1&1&1\end{pmatrix}+\osmall(1)\quad\text{as }k\to\infty.
\end{equation}
The transition from the $3\times 3$ matrix RH problem to the $1\times 3$ row vector RH problem can be viewed as the multiplication of the former by the constant row vector $(1\ \ 1\ \ 1)$ from the left, which suppresses the singularities at $k=\varkappa_{\nu}\equiv\ee^{\frac{\ii\pi}{3}(\nu-1)}$.

\subsection{Residue conditions}

In order to complete the formulation of the RH problem, one has to complete the jump condition \eqref{RH-y-v} and the normalization condition \eqref{vec-norm} with residue conditions at the possible poles of $\mu(y,t,k)$ or $\hat M(y,t,k)$ in $\D{C}\setminus\Sigma$ where $\Sigma=\cup_{\nu=1}^6l_{\nu}$. The following statement holds true (see \cite{BC}).

\emph{Generically, there are at most a finite number of poles lying in $\D{C}\setminus\Sigma$, each of them being simple, with residue conditions of a special matrix form: distinct columns of $M$ (distinct entries of $\mu$) have distinct poles, and if $k_n$ is a pole, then
\begin{equation}\label{res}
\Res_{k=k_n}\mu(y,t,k)=\lim_{k\to k_n}\mu(y,t,k)\ee^{y\Lambda(k)+tA(k)}v_n\ee^{-y\Lambda(k)-tA(k)},
\end{equation}
where the $3\times 3$ matrix $v_n$ has only one non-zero entry at a position depending on the sector of $\D{C}\setminus\Sigma$ containing $k_n$.}
\end{subequations}

For example, if $k_n\in\Omega_1$, a non-zero entry of $v_n$ can be either $(v_n)_{12}$ or $(v_n)_{23}$. Then the positions (as well as the values) of the non-zero entries of the poles in other sectors of $\D{C}\setminus\Sigma$ are determined by the symmetries (S1)-(S3). 

Similarly to $S_0(k)$, the residue conditions are determined by the initial values $u_0(x)$ (putting $t=0$ in \eqref{res}).

A closer look at the $(y,t)$-dependence of the exponential in the residue condition \eqref{res} reveals the following. If $w_n:=\ee^{y\Lambda(k_n)+tA(k_n)}v_n\ee^{-y\Lambda(k_n)-tA(k_n)}$, and if $(v_n)_{jl}$ is the non-zero-entry of $v_n$, then
\[
w_n=\ee^{y(\lambda_j(k_n)-\lambda_l(k_n))+t(A_j(k_n)-A_l(k_n))}\,v_n.
\]
Moreover, the exponential factor has the form (see \eqref{la-a})
\begin{align}\label{res-yt}
\ee^{y(\lambda_j(k_n)-\lambda_l(k_n))+t(A_j(k_n)-A_l(k_n))}
&=\ee^{(\lambda_j(k_n)-\lambda_l(k_n))\bigl(y-\frac{t}{\lambda_j(k_n)\lambda_l(k_n)}\bigr)}\notag\\
&=\ee^{g\bigl(y-\frac{3t}{1-v^2}\bigr)},
\end{align}
where
\[
g=\lambda_j(k_n)-\lambda_l(k_n).
\]
The last equality in \eqref{res-yt} follows from the fact (see the definition \eqref{la-eqn} of $\lambda_j$) that $3\lambda_j\lambda_l=1-(\lambda_j-\lambda_l)^2$ for all $j\neq l$. 

A visible analogy of \eqref{res-yt} with the structure of the residue conditions in the case of the Camassa--Holm equation (see (3-8) in \cite{BS08}) suggests making the conjecture that the actual positions of the poles associated with a global solution of the initial-value problem \eqref{DP-om} with smooth, decaying, real-valued initial data $u_0(x)$ are those for which the associated value of $g$ is real and, moreover, $0<\abs{g}<1$. Notice that for the Camassa--Holm equation, the discrete spectrum of the linear operator (from the associated Lax pair) generated by smooth, decaying, real-valued initial data indeed gives rise to residue conditions of the form \eqref{res-yt} with parameters satisfying the requirements above, see \cite{C01,BKST}.

Due to the symmetry relations, it is enough to consider the location of the poles in one sector of $\D{C}\setminus\Sigma$, say, in $\Omega_1$. 

The exponential factor for $(v_n)_{12}$ contains
\begin{align}\label{la1-2}
\lambda_1(k)-\lambda_2(k) 
&= \frac{1}{\sqrt{3}}\left(\omega k+\frac{1}{\omega k}-\omega^2k-\frac{1}{\omega^2 k}\right)\notag\\
&=\frac{\omega-\omega^2}{\sqrt{3}}\left(k-\frac{1}{k}\right)=\ii\left(k-\frac{1}{k}\right).
\end{align}
It follows that the l.h.s.\ of \eqref{la1-2} is real either for $\{k\mid\Re k =0\}$ or for $\{k\mid|k|=1\}$. The imaginary axis does not intersect with $\Omega_1$ whereas for $k=\ee^{\ii\varphi}$ with $0<\varphi<\pi/6$, the associated $g=\lambda_1-\lambda_2$ satisfies the inequalities $0<\abs{g}<1$.

The exponential factor for $(v_n)_{23}$ contains
\begin{equation}\label{la2-3}
\lambda_2(k)-\lambda_3(k) = \frac{1}{\sqrt{3}}
\left(\omega^2 k +\frac{1}{\omega^2 k}- k -\frac{1}{ k}\right) = 
\ii \left(\omega k-\frac{1}{\omega k}\right).
\end{equation}
Thus the admissible arc for the location of the poles in $\Omega_1$ for this entry $(23)$ is $\{k\mid k=\ee^{\ii\varphi},\ \pi/6<\varphi<\pi/3\}$. Notice that the symmetry relations provide that the associated residue conditions in $\Omega_3$ are those for the $(12)$ entry. Similarly for the other sectors $\Omega_{\nu}$.

In analogy with the Camassa--Holm equation (cf.~\cite{BS-D}), the poles with the residue conditions described above can be associated with the soliton long-time behavior of the solution of the Cauchy problem \eqref{DP-om}-\eqref{IC}, the velocity of all solitons being greater than $3$ (since $|g|<1$ --- the velocity $c$ of a soliton being related to $g$ as $c=3/(1-g^2)$, see \eqref{res-yt}). On the other hand, the form of the residue conditions with poles at $\{k\mid k\in(\ee^{\frac{\ii\pi}{6}+\frac{\ii\pi l}{3}})\D{R}_+,\,l=0,\dots,5\}$ suggests that the solution of the associated RH problem gives rise to ``loop solitons'' (a direct construction of the loop solitons is given in \cite{M06}): in this case,
\[
g=\ii\left(\ii\rho-\frac{1}{\ii\rho}\right)=-\left(\rho+\frac{1}{\rho}\right),\quad\rho\in\D{R}
\]
and thus $|g|>2$ (see (30a) in \cite{M06} with $\kappa k_j>2$); so they move, as opposite to the smooth solitons, in the negative  direction. These solutions are not classical ones: in the $(y,t)$ scale, they are given, similarly to the solitons, in an univalent way, but the transition to the original $(x,t)$ scale makes them multivalued.

A detailed account for the soliton solutions in the framework of the RH approach will be given elsewhere.

\subsection{Solution of the Cauchy problem in terms of the solution of the RHP}

\subsubsection*{Assumption}
In what follows we assume that the RH problem consisting in finding a piecewise, vector-valued function $\mu$ satisfying the jump condition \eqref{RH-y-v}, the normalization condition \eqref{vec-norm}, the symmetry conditions 
\[
\mu(k)=\mu(k\omega)\cdot C=\overline{\mu(\bar k)}\cdot\Gamma_1=\overline{\mu(\bar k^{-1})}
\]
(cf.~Proposition~\ref{p2}), and the appropriate residue conditions has a unique solution. 

In order to obtain $u(x,t)$ from its solution $\mu(y,t,k)$, it turns out that it is convenient to evaluate $\mu(y,t,k)$ at distinguished points in the $k$-plane, more precisely, at the points 
\[
\kappa_{\nu}=\ee^{\frac{\ii\pi}{6}}+\ee^{\frac{\ii\pi}{3}(\nu-1)},\quad\nu=1,\dots,6
\] 
characterized by the property $z(\kappa_{\nu})=0$.

\subsubsection*{Last Lax pair}
Coming back to the system \eqref{Lax-vec}, let us introduce $\tilde\Phi^{(0)}=\tilde\Phi^{(0)}(x,t,z)$ by
\[
\tilde\Phi^{(0)}=P^{-1}\Phi.
\] 
This reduces \eqref{Lax-vec} to 
\begin{subequations}  \label{Lax-vec-0}
\begin{align}\label{Lax-0}
\tilde\Phi^{(0)}_x -\Lambda(z)\tilde\Phi^{(0)}&= \tilde U^{(0)}\tilde\Phi^{(0)},\\
\tilde\Phi^{(0)}_t - A(z)\tilde\Phi^{(0)} &=  \tilde V^{(0)}\tilde\Phi^{(0)},
\end{align}
where 
\begin{align}\label{U-0}
\tilde U^{(0)}(x,t,z)&=z^3 m(x,t) 
\begin{pmatrix}
\frac{1}{3\lambda_1^2(z)-1} & 0 & 0 \\
0 & \frac{1}{3\lambda_2^2(z)-1} & 0 \\
0 & 0 &  \frac{1}{3\lambda_3^2(z)-1}
\end{pmatrix}
\begin{pmatrix}
1 & 1 & 1 \\
1 & 1 & 1 \\
1 & 1 & 1 
\end{pmatrix},\\
\label{V-0}
\tilde V^{(0)}(x,t,z)&=P^{-1}(z)
\begin{pmatrix}
u_x & -u & 0 \\
u & 0 & -u \\
u_x-z^3u(m+1) & 0 &  -u_x
\end{pmatrix}P(z).
\end{align}
\end{subequations}
Now notice that $ \tilde U^{(0)}(x,t,z)\big|_{z=0}\equiv 0$. Therefore, introducing $M^{(0)}=M^{(0)}(x,t,z)$ by 
\[
M^{(0)}=\tilde\Phi^{(0)}\ee^{-x\Lambda -tA},
\]
and determining $M^{(0)}$ as the solution of a system of integral equations similar to the system \eqref{M-int3} determining $M$, we have:
\[
M^{(0)}(x,t,z)\big|_{z=0}\equiv I.
\]
On the other hand, since $M^{(0)}$ and $M$ are solutions of differential equations coming from the same system of differential equations \eqref{Lax-vec}, and since they have the same limit as $x\to +\infty$ for $k\not\in\Sigma$:
\[
M,\,M^{(0)}\xrightarrow[x\to+\infty]{}I,
\]
it follows that they are related by
\begin{equation}\label{MM-rel}
M(x,t,k)=P^{-1}(k)D^{-1}(x,t)P(k) M^{(0)}(x,t,k)\ee^{(x-y(x,t))\Lambda(k)}.
\end{equation}
Particularly, at $k=\kappa_1\equiv\ee^{\frac{\ii\pi}{6}}$ we have 
\begin{subequations}\label{k0}
\begin{align}
\Lambda(\ee^{\frac{\ii\pi}{6}})&= 
\begin{pmatrix}
-1 & 0 & 0 \\
0 & 0 & 0 \\
0 & 0 & 1
\end{pmatrix},\\
P^{-1}(k)D^{-1}(x,t)P(k)\big|_{k=\ee^{\frac{\ii\pi}{6}}}&= 
\frac{1}{2}\begin{pmatrix}
1+\frac{1}{q} & 0 & -1+\frac{1}{q} \\
2\bigl(q-\frac{1}{q}\bigr) & 2q &  2\bigl(q-\frac{1}{q}\bigr) \\
-1+\frac{1}{q} & 0 & 1+\frac{1}{q}
\end{pmatrix}.
\end{align}
\end{subequations}
Now observe that 
\[
\begin{pmatrix}1&1&1\end{pmatrix}P^{-1}(k)D^{-1}(x,t)P(k)\big|_{k=\ee^{\frac{\ii\pi}{6}}}=q(x,t)\begin{pmatrix}1&1&1\end{pmatrix}.
\]
Combined with \eqref{MM-rel}, this implies that the row vector solution $\mu(y,t,k)$ evaluated at $k=\kappa_1\equiv\ee^{\frac{\ii\pi}{6}}$ takes the value
\begin{align}\label{mu-0}
\mu(y,t,\ee^{\frac{\ii\pi}{6}}) 
&=q(x,t)\begin{pmatrix}1&1&1\end{pmatrix}
\begin{pmatrix}
\ee^{-\int_x^\infty (q(\xi,t)-1)\dd\xi} & 0 & 0 \\
0 & 1 & 0 \\
0 & 0 & \ee^{\int_x^\infty(q(\xi,t)-1)\dd\xi}
\end{pmatrix}\notag\\
&=\begin{pmatrix}
q(x,t)\ee^{-\int_x^\infty (q(\xi,t)-1)\dd\xi}&q(x,t)&q(x,t)
\ee^{\int_x^\infty (q(\xi,t)-1)\dd\xi}
\end{pmatrix}.
\end{align}
Taking into account that in terms of functions of the variables $(x,t)$ we have $u(y,t)=\frac{\partial x}{\partial t}(y,t)$ (this follows from \eqref{DP-b} and from the definition \eqref{y} of the new variable $y$), the relation \eqref{mu-0} provides a parametric representation of $u$.

\begin{thm}\label{u-repr}
Let $\mu\equiv\mu(y,t,k)=\begin{pmatrix}\mu_1&\mu_2&\mu_3\end{pmatrix}$ be the solution of the Riemann--Hilbert problem \eqref{RH-y-v}--\eqref{res}, where $S(k)$ is the scattering matrix and $\{v_n\}$ are the residues associated with the initial data $u_0(x)$. 

Then the solution $u(x,t)$ of the Cauchy problem \eqref{DP-om}--\eqref{IC} for the Degasperis--Procesi equation can be expressed in terms of $\mu(y,t,k)$, evaluated at $k=\ee^{\frac{\ii\pi}{6}}$, in parametric form:
\begin{align}\label{u-y}
u(y,t) &= \frac{\partial}{\partial t}\log \frac{\mu_{j+1}}{\mu_j}(y,t,\ee^{\frac{\ii\pi}{6}}), \\ \notag
x(y,t) &= y + \log \frac{\mu_{j+1}}{\mu_j}(y,t,\ee^{\frac{\ii\pi}{6}}),\qquad j=1\ \text{or}\ 2.
\end{align}
\end{thm}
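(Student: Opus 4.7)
The relation \eqref{mu-0} already derived above carries essentially all the information; what remains is to interpret it correctly. From \eqref{mu-0}, for $j=1$ or $j=2$,
\[
\log\frac{\mu_{j+1}(y,t,\ee^{\frac{\ii\pi}{6}})}{\mu_j(y,t,\ee^{\frac{\ii\pi}{6}})}=\int_x^\infty(q(\xi,t)-1)\dd\xi,
\]
because the ratio of two consecutive entries of \eqref{mu-0} produces the same exponential factor while the common prefactor $q(x,t)$ cancels. Comparing this with the definition \eqref{y} of $y(x,t)=x-\int_x^\infty(q(\xi,t)-1)\dd\xi$ gives immediately the second line of the theorem,
\[
x(y,t)=y+\log\frac{\mu_{j+1}}{\mu_j}(y,t,\ee^{\frac{\ii\pi}{6}}).
\]

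To obtain the formula for $u(y,t)$, I would establish the identity $u(y,t)=\frac{\partial x}{\partial t}(y,t)$ at fixed $y$. From \eqref{y} one computes $\frac{\partial y}{\partial x}=q$, and from the DP equation in the form \eqref{DP-b}, namely $q_t=-(uq)_x$, together with the decay of $uq$ at $+\infty$, one obtains $\frac{\partial y}{\partial t}=-uq$. Differentiating the defining relation $y(x(y,t),t)=y$ with respect to $t$ then yields $\frac{\partial x}{\partial t}\big|_y=-(\partial y/\partial t|_x)/(\partial y/\partial x|_t)=u$. Applying $\partial/\partial t|_y$ to $x(y,t)=y+\log(\mu_{j+1}/\mu_j)$ produces the first line of the theorem.

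The only subtle point is the invertibility of the map $x\mapsto y$ needed to make the parametric description single-valued. Since $\frac{\partial y}{\partial x}=q=(m+1)^{1/3}>0$, the map is at least a local diffeomorphism; for sufficiently small data it is a global one and the parametric formula delivers $u(x,t)$ without ambiguity. In the general case the representation may be multivalued in $(x,t)$, which, as indicated in the excerpt, corresponds to the loop-soliton phenomenon. Apart from this observation everything is algebraic once \eqref{mu-0} is in hand; the real analytic content has already been absorbed in the derivation leading to \eqref{mu-0}, which rests on matching the two solutions $M$ and $M^{(0)}$ of the same Lax pair at the distinguished point $k=\ee^{\frac{\ii\pi}{6}}$ where $z(k)=0$, and this matching step together with the verification that $u$ can be recovered without ambiguity from the vector (rather than matrix) RH data is the main obstacle to the proof.
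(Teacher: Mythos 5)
Your proposal is correct and follows essentially the same route as the paper: the theorem is read off from \eqref{mu-0} by taking the ratio of consecutive entries to get $\log(\mu_{j+1}/\mu_j)=\int_x^\infty(q-1)\,\dd\xi=x-y$, combined with the identity $u=\frac{\partial x}{\partial t}\big|_y$, which the paper likewise derives from $q_t=-(uq)_x$ and the definition \eqref{y}. Your explicit chain-rule computation of $\frac{\partial x}{\partial t}\big|_y$ and the remark on invertibility of $x\mapsto y$ merely spell out details the paper leaves implicit.
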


Notice that the structure of the parametric representation \eqref{u-y} is similar to that in the case of the Camassa--Holm equation. Moreover, this structure appears also in formulae for pure multisoliton solutions given in \cite{M05}.

\section{Long-time asymptotics}

\subsection{Qualitative results}   \label{asy.results}

The analysis of the long-time behavior of the solution of the IVP is based on the analysis of the large-$t$ behavior of the solution of the associated RH problem. The latter can be done in the framework of the nonlinear steepest descent method, whose key ingredient is the deformation of the original RH problem in accordance with the ``signature table'' for the phase functions involved in the jump matrix. 

The analysis presented in the previous section shows that the structure of the jump matrix $S$, which is $3\times 3$, is essentially $2\times 2$: for each straight line of the contour $\Sigma$, a non-trivial block of $S(k)$ is  $2\times 2$, see \eqref{S-SS-1}, \eqref{S-SS-4} (under an appropriate change of basis), with only one exponential involved for each part, of the form \eqref{res-yt}. Now observe that this exponential is essentially the same as in the case of the Camassa--Holm equation: see \cite[Eq.~(3.1)]{BS-D}); only the constant $2$ in the numerator is replaced by $3$. Consequently, the deformations of the each part of $\Sigma$ are performed in the same way as the deformation of the real line in the case of the Camassa--Holm equation, leading to the similar asymptotic behavior of the solution, see \cite{BS-D, BKST}:
\begin{figure}[ht]
\centering\includegraphics[scale=1]{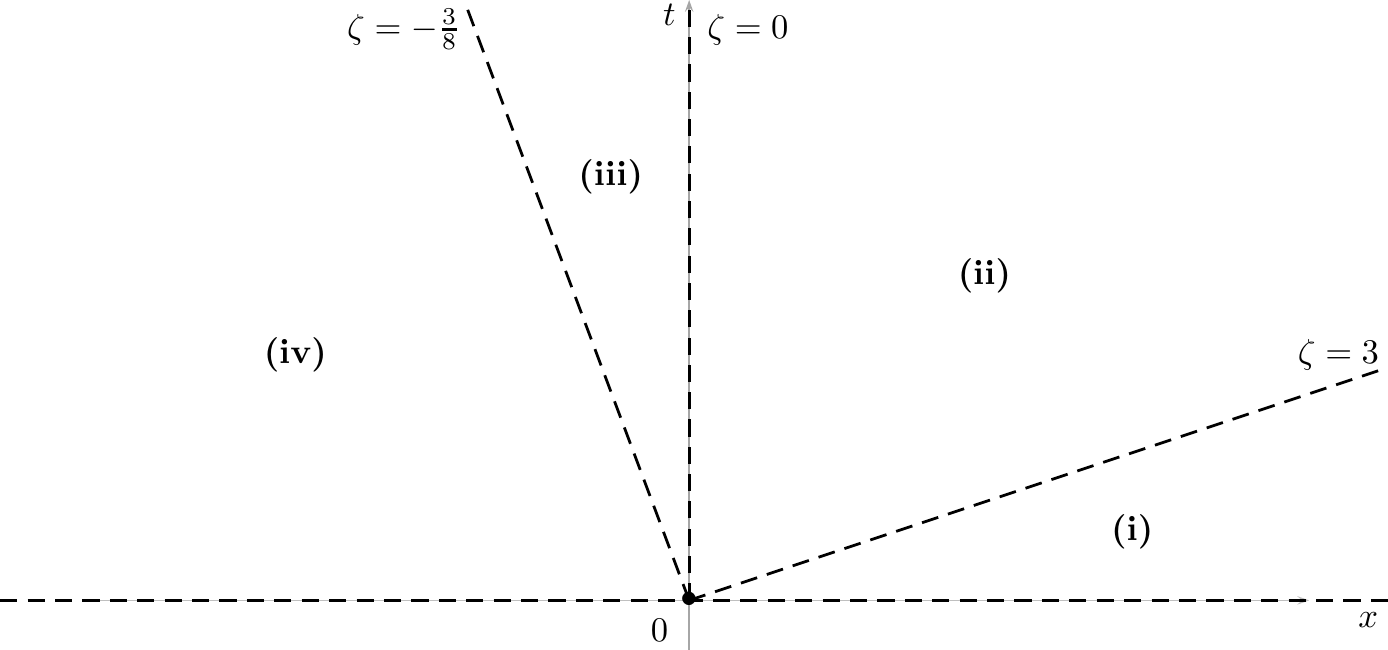}
\caption{Sectors in the $(x,t)$-half-plane with different asymptotics, $\zeta=x/t$} 
\label{fig:asymptotic.sectors}
\end{figure}
\begin{enumerate}[(i)]
\item
In the sector $\frac{x}{t}>3$, the long-time  asymptotics is dominated by the solitons; in the case when there are no solitons, the asymptotics in this sector is fast decaying.
\item
For $0<\frac{x}{t}<3$, the asymptotics has the form of slowly decaying
(as $t^{-1/2}$) modulated oscillations.
\item
For $-\frac{3}{8}<\frac{x}{t}<0$, the asymptotics has the form of a sum of two slowly decaying modulated oscillations.
\item
For $\frac{x}{t}<-\frac{3}{8}$, the asymptotics is fast decaying.
\end{enumerate}
Besides, there are transition zones between the sectors, where the (slowly decaying) asymptotics is given in terms of the dedicated solutions of the Painlev\'e II equation.

To illustrate the asymptotics, here we present the asymptotic formula for the first sector of decaying oscillations $0<\frac{x}{t}<3$, see Section~\ref{asy.sector.two}. For other asymptotic regions, see Section~\ref{asy.other.sectors}.

\subsection{Asymptotics in sector (ii)}   \label{asy.sector.two}

In the next theorem we give the asymptotic formula for the sector $0<\zeta=\frac{x}{t}<3$, assuming for simplicity that there are no discrete eigenvalues. 

\begin{thm}[asymptotics for $0<\zeta<3$] \label{t4}
Let $M(x,t,k)$ be the solution of \eqref{M-int3} associated with the solution of the initial value problem \eqref{DP-om}-\eqref{IC}. Assume $M$ has no singularities other than those described in Proposition~\ref{p3}. 

Then for $\varepsilon<\frac{x}{t}<3-\varepsilon$ for any $\varepsilon>0$, the solution $u(x,t)$ of \eqref{DP-om}-\eqref{IC} behaves, as $t\to\infty$, as follows:
\begin{equation}   \label{asymptotics}
u(x,t)=\frac{c_1}{\sqrt t}\sin(c_2t + c_3\log t+c_4)(1+\osmall(1))
\end{equation}
where $c_1,\dots,c_4$ are functions of $\zeta=\frac{x}{t}$, which are determined by $u_0(x)$ in terms of the associated reflection coefficient $r(k)$, see \eqref{eq:tc14}-\eqref{eq:c14} below.
\end{thm}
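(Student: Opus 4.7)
The plan is to apply the Deift--Zhou nonlinear steepest descent method to the vector RH problem \eqref{RH-y-v}--\eqref{res}, and then to extract $u(x,t)$ from the large-$t$ behavior of $\mu(y,t,\ee^{\frac{\ii\pi}{6}})$ via Theorem~\ref{u-repr}. The analysis exploits the reduction already noted in Section~\ref{asy.results}: although $S$ is $3\times 3$, on each ray $l_\nu$ its non-trivial block is essentially $2\times 2$ with a single scalar exponential of the form \eqref{res-yt} (see \eqref{S-SS-1}, \eqref{S-SS-4}), so after using the symmetries of Proposition~\ref{p2} one is reduced to carrying out the steepest descent on $l_1=\D{R}_+$ (and its rotates) with a phase structurally identical to that of the Camassa--Holm case in \cite{BS-D}, the constant $2$ being replaced by $3$.

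First, I factorize the jump on $l_1$ as a product of upper- and lower-triangular factors, determine the signature table of $\Re(\lambda_j-\lambda_l)$ on $l_1$ for $0<\zeta=y/t<3$, and observe that the associated phase has two real saddle points $\pm k_0(\zeta)$ with $0<k_0<1$. I then introduce a scalar $\delta$-function, defined as the solution of the scalar RH problem on $(-k_0,k_0)$ with jump $1-\abs{r(k)}^2$, and conjugate $\mu$ by $\diag(\delta,\delta^{-1},1)$ (and by its rotated analogues on the other rays). This turns the jump on the outer portion of each ray into the identity and reorganizes the inner portion into a form whose two triangular factors extend as analytic bounded matrices into the upper and lower half-neighborhoods of $l_1$, respectively. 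A standard contour deformation then reduces the global problem to small cross-shaped contours at each saddle, with remainders of order $\ord(t^{-\infty})$ away from the saddles and $\ord(t^{-1})$ near them.

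Second, near each saddle the rescaled local RH problem is solvable explicitly in terms of parabolic cylinder functions; its large-argument expansion produces a matrix whose off-diagonal entries carry oscillatory factors of amplitude $t^{-1/2}$ with phase $c_2(\zeta)\,t+c_3(\zeta)\log t+\textrm{const}$. Here $c_2$ is the value of the stationary phase at $k_0$, the $c_3\log t$ term comes from the asymptotic behavior of $\delta$ near $k_0$, and the constant assembles the $\Gamma$-function factors from the parabolic cylinder model together with the limiting value of $\delta(k_0,\zeta)$. Adding the six symmetry-related saddle contributions and using the symmetries of Proposition~\ref{p2} (together with the fact that $\ee^{\frac{\ii\pi}{6}}$ is invariant under the reality symmetry $k\mapsto 1/\bar k$) shows that they combine into a single real oscillation. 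Substituting the resulting asymptotic of $\mu(y,t,k)$ into \eqref{u-y} yields \eqref{asymptotics} in the $(y,t)$ variables, and the companion identity $x=y+\log(\mu_{j+1}/\mu_j)$ in Theorem~\ref{u-repr} gives $x-y=\ord(t^{-1/2})$, so the formula transfers to the $(x,t)$ variables with $\zeta=x/t$ at leading order; the explicit dependence \eqref{eq:tc14}--\eqref{eq:c14} of $c_1,\dots,c_4$ on $r(k_0)$ then follows by tracking the parabolic cylinder coefficients.

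The main obstacle is not any single deep step --- each piece parallels the CH analysis of \cite{BS-D,BKST} --- but the bookkeeping required to verify three points at once: (a) that the six symmetry-related saddle contributions combine coherently into the single scalar oscillation of \eqref{asymptotics}, with the correct real amplitude $c_1$ and correct phase shift $c_4$; (b) that the pole-type singularities of $M$ at the points $\varkappa_\nu$ described in Proposition~\ref{p3} persist through every contour deformation and $\delta$-conjugation, so that the row-vector RH problem with normalization \eqref{vec-norm} remains uniquely solvable at every stage and the evaluation at $k=\kappa_1=\ee^{\frac{\ii\pi}{6}}$ stays regular; and (c) that the error estimates coming from the small-cross reduction and the parabolic cylinder match-up are uniform in $\zeta$ on compact subsets of $(0,3)$, which is what the $\osmall(1)$ in \eqref{asymptotics} requires as $t\to\infty$.
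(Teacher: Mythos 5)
Your overall strategy coincides with the paper's: nonlinear steepest descent on the vector RH problem, a scalar $\delta$-conjugation, parabolic cylinder local models at the saddles, and extraction of $u$ via Theorem~\ref{u-repr}. However, there are concrete errors in the execution that would change the final constants. First, you count the saddle points in the $k$-plane incorrectly. The phase is $\Theta(\xi,\tilde k(k))$ with $\tilde k(k)=\frac{1}{2}(\frac{1}{k}-k)$, and this map is two-to-one: the two real critical points $\pm\kappa_0$ of the CH-type phase in the $\tilde k$-plane pull back to \emph{four} real critical points $\pm p_0$, $\pm\frac{1}{p_0}$ on $\D{R}$ (and likewise on $\omega\D{R}$ and $\omega^2\D{R}$), giving twelve local crosses, not six. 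This is not mere bookkeeping: the contributions at $p_0$ and at $-1/p_0$ correspond to the same $\tilde k$-saddle $\kappa_0$ and interfere, which is why the amplitude $c_1$ in \eqref{eq:c14} carries the factor $\cos\bigl(\arg r(-p_0^{-1})-\arg r(p_0)\bigr)$; with only two saddles per line you cannot produce this factor. Relatedly, your $\delta$-function is posed on the wrong arcs: the signature table (Figure~\ref{fig:signs.real.axis}) forces the conjugation on $(-\frac{1}{p_0},-p_0)\cup(p_0,\frac{1}{p_0})$, as in \eqref{delta}, not on $(-k_0,k_0)$. The paper also needs the auxiliary symmetry $\abs{r(-k)}=\abs{r(k)}$ (Proposition~\ref{p42}), which it derives from the identity $\mu_1\mu_3=\mu_2^2$ at $\ee^{\frac{\ii\pi}{6}}$, in order to relate the $\delta_{\ast}$-factors at the four saddles on each line; your sketch does not supply this.

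Second, your claim that $x-y=\ord(t^{-1/2})$ is false and loses part of the phase. From \eqref{u-y} and \eqref{mu21} one has $x-y=\log\frac{\mu_2}{\mu_1}(\ee^{\frac{\ii\pi}{6}})\to-\log\tilde\delta_{11}(\ee^{\frac{\ii\pi}{6}})$, a nonzero $\ord(1)$ function of $\zeta$; since the phase $c_2t$ is of order $t$, this shift in $y$ contributes the additional term $\Delta=-2\kappa_0\log\tilde\delta_{11}(\ee^{\frac{\ii\pi}{6}})$ to $c_4$, which your argument would miss. Finally, your concern (b) about the pole singularities at the points $\varkappa_\nu$ is moot: passing to the row vector $\begin{pmatrix}1&1&1\end{pmatrix}M$ suppresses them by construction, which is precisely why the paper formulates the asymptotic analysis for the vector problem rather than the matrix one.
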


\begin{proof}[Sketch of proof] 
Consider the ``signature tables'' for the exponentials in the jump matrix 
\[
S(y,t,k):=\ee^{y\Lambda(k)+tA(k)}S_0(k)\ee^{-y\Lambda(k)-tA(k)}
\]
in \eqref{RH-y-v}. For $k\in\D{R}$ (see \eqref{S-SS-1} and \eqref{S-SS-4}), the exponentials are involved in the $(12)$ and $(21)$ entries. Introducing 
\[
\tilde k(k)=\tfrac{1}{2}\bigl(\tfrac{1}{k}-k\bigr),
\]
and using \eqref{la1-2} one can write $g\equiv\lambda_1(k)-\lambda_2(k)=-2\ii\tilde k$. Then, using \eqref{res-yt}, one gets:
\[
S_{12}(y,t,k)=\bar r(k)\ee^{(\lambda_1(k)-\lambda_2(k))\bigl(y-\frac{t}{\lambda_1(k)\lambda_2(k)}\bigr)}\equiv\bar r(k)\ee^{-2\ii\tilde t\Theta(\xi,\tilde k(k))},
\]
where $\xi:=\frac{2y}{3t}$, $\tilde t:=\frac{3}{2}t$, and thus
\[
S(k)=\begin{pmatrix}
1&\bar r(k)\ee^{-2\ii\tilde t\Theta(\xi,\tilde k(k))}&0\\
-r(k)\ee^{2\ii\tilde t\Theta(\xi,\tilde k(k))}&1-\abs{r(k}^2&0\\
0&0&1
\end{pmatrix},\quad k\in\D{R},
\]
where $\Theta(\xi,\tilde k)$ is exactly the same as for the Camassa--Holm equation \cite{BKST,BS-D}:
\[
\Theta(\xi,\tilde k)=\xi\tilde k-\frac{2\tilde k}{1+4\tilde k^2}\,.
\]
Therefore, the distribution of signs of $\Im\Theta(\xi,\tilde k(k))$ in the $k$-plane can be obtained applying the inverse map $\tilde k\leadsto k$ to the ``signature tables'' for the Camassa--Holm equation \cite{BKST,BS-D}. In the $\tilde k$-plane, denoting $\tilde k=\tilde k_1+\ii\tilde k_2$, one has
\[
\Im\Theta(\xi,\tilde k)=\tilde k_2\times\hat\Theta(\xi,\tilde k_1,\tilde k_2),
\]
where
\[
\hat\Theta(\xi,\tilde k_1,\tilde k_2)=\xi-\frac{2\croch{1-4(\tilde k_1^2+\tilde k_2^2)}}{\croch{1+4(\tilde k_1^2-\tilde k_2^2)}^2+64\tilde k_1^2\tilde k_2^2}\,.
\]
Thus, in the $\tilde k$-plane the real critical points $\tilde k_1\in\D{R}$, where $\hat\Theta(\xi,\tilde k_1,0)=0$, are determined by the equation
\begin{equation}  \label{eq:tildek.critical}
\xi=\frac{2(1-4\tilde k_1^2)}{(1+4\tilde k_1^2)^2}\,,
\end{equation}
or, equivalently, in terms of $\varpi=4\tilde k_1^2+1\geq 1$, by
\begin{equation}   \label{eq:varpi}
\xi\varpi^2+2\varpi-4=0. 
\end{equation}
In the range $0<\xi<2$ this equation has exactly one solution $\geq 1$, which gives in the $\tilde k$-plane two real critical points $\pm\kappa_0$:
\begin{equation}   \label{kappa-0}
\kappa_0(\xi)=\left(\frac{\sqrt{1+4\xi}-1-\xi}{4\xi}\right)^{\frac{1}{2}}.
\end{equation}
Applying the inverse map $\tilde k\leadsto k$ to obtain the corresponding critical points in the $k$-plane we get four real critical points $\pm p_0$, $\pm\frac{1}{p_0}$: 
\begin{equation}   \label{p-0}
p_0=p_0(\xi)=-\kappa_0+\sqrt{\kappa_0^2+1}\,,
\end{equation}
so that
\[
\kappa_0=\frac{1}{2}\Bigl(\frac{1}{p_0}-p_0\Bigr)=\tilde k(p_0)=-\tilde k\bigl(\tfrac{1}{p_0}\bigr).
\]
Thus, in the range $0<\xi<2$ the distribution of signs of $\Im\Theta(\xi,\tilde k(k))$ in the $k$-plane has the form seen in Figure~\ref{fig:signs.real.axis}.
\begin{figure}[ht]
\centering\includegraphics[scale=1]{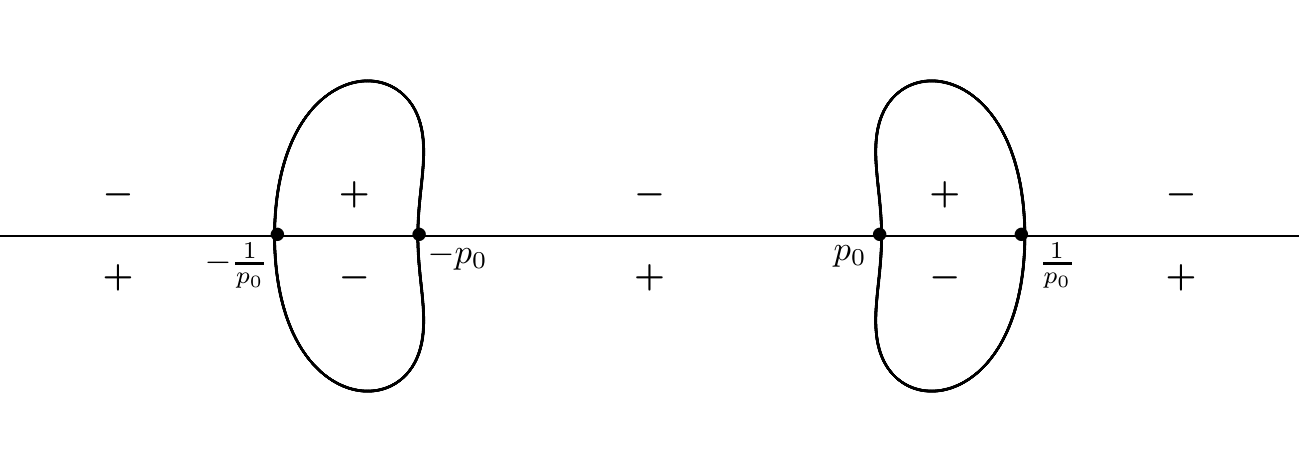}
\caption{Signs of $\Im\Theta(\xi,\tilde k(k))$ in the $k$-plane near $\D{R}$ for $0<\xi=\frac{2y}{3t}<2$} 
\label{fig:signs.real.axis}
\end{figure}

Note that \eqref{y} and estimates below imply that $\frac{y}{t}\sim\frac{x}{t}$ as $t\to\infty$. Thus the region $0<\xi=\frac{2y}{3t}<2$ is asymptotically equivalent to the sector $0<\zeta=\frac{x}{t}<3$. In the considered range of $\xi$, one has $\kappa_0\in(0,\frac{1}{2})$ and thus $p_0\in\bigl(\frac{\sqrt{5}-1}{2},1\bigr)$.

The symmetries of $S$ (see Proposition \ref{p2}) yield
\begin{alignat*}{2}
S(y,t,k)&=
\begin{pmatrix}
1-\abs{r(\omega^2k}^2&0&-r(\omega^2k)\ee^{2\ii\tilde t\Theta(\xi,\tilde k(\omega^2k))}\\
0&1&0\\
\bar r(\omega^2k)\ee^{-2\ii\tilde t\Theta(\xi,\tilde k(\omega^2k))}&0&1
\end{pmatrix},&\quad&k\in\omega\,\D{R}
\shortintertext{and}
S(y,t,k)&=
\begin{pmatrix}
1&0&0\\
0&1&\bar r(\omega k)\ee^{-2\ii\tilde t\Theta(\xi,\tilde k(\omega k))}\\
0&-r(\omega k)\ee^{2\ii\tilde t\Theta(\xi,\tilde k(\omega k))}&1-\abs{r(\omega k}^2
\end{pmatrix},&&k\in\omega^2\,\D{R}.
\end{alignat*}
It follows that the distribution of signs of $\Im\Theta$ for $k$ near each part (line) of $\Sigma$ is as shown in Figure~\ref{fig:signs.parts.sigma}.
\begin{figure}[ht]
\centering\includegraphics[scale=.8]{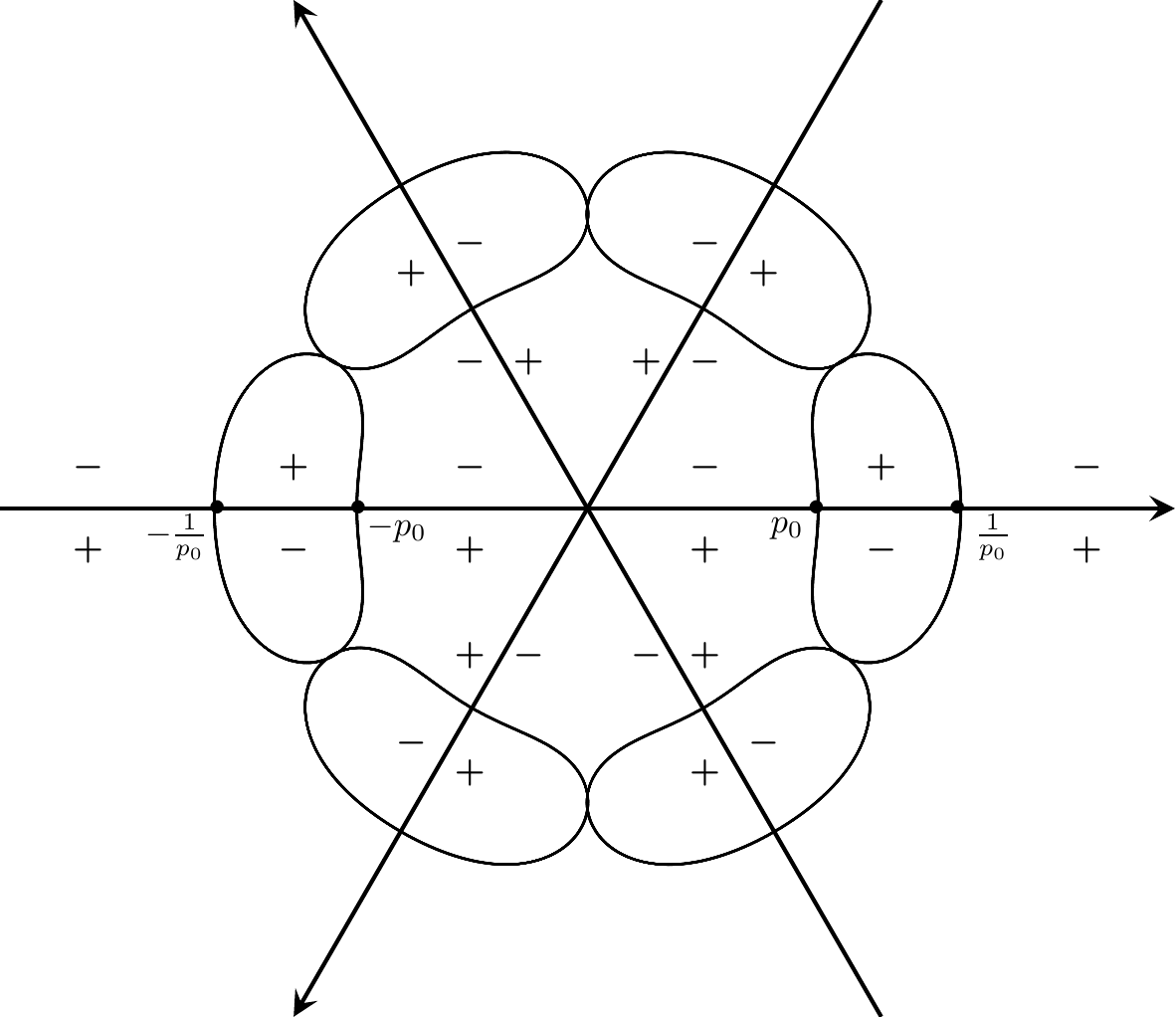}
\caption{Signs of $\Im\Theta(\xi,\tilde k(k))$ near each part (line) of $\Sigma$ for $0<\xi<2$} 
\label{fig:signs.parts.sigma}
\end{figure}

This signature table suggests introducing the diagonal factor $\tilde\delta(k)\equiv\tilde\delta(\xi,k)$:
\begin{align}   \label{tilde.delta}
\tilde\delta(k)&=
\begin{pmatrix}
\delta(k)\delta^{-1}(\omega^2k)&0&0\\
0&\delta^{-1}(k)\delta(\omega k)&0\\
0&0&\delta(\omega^2k)\delta^{-1}(\omega k)
\end{pmatrix}
\shortintertext{where}   \label{delta}
\delta(k)\equiv\delta(\xi,k)&=\exp\left\lbrace\frac{1}{2\ii\pi}\left(\int_{-\frac{1}{p_0(\xi)}}^{-p_0(\xi)}+\int_{p_0(\xi)}^{\frac{1}{p_0(\xi)}}\right)\frac{\log(1-\abs{r(s)}^2)^{-1}}{s-k}\,\dd s\right\rbrace,
\end{align}
in order to deform the RH problem to such a form that the associated jump matrix approaches, as $t\to\infty$, the identity matrix, uniformly outside small vicinities of the $12$ points: $-\frac{1}{p_0}$, $-p_0$, $p_0$, $\frac{1}{p_0}$ and their images after rotation by $\omega$ and by $\omega^2$. Indeed, introducing $\mu^{(1)}:=\mu\tilde\delta^{-1}(k)$, the jump conditions for $\mu^{(1)}$ are
\[
\mu_+^{(1)}=\mu_-^{(1)}S_1(\xi,t,k)
\]
with $S_1$ possessing appropriate triangular factorizations with non-diagonal terms decaying to $0$ upon deforming the contour from $\Sigma$ to a contour close to $\Sigma$ with self-intersection points at $-\frac{1}{p_0}$, $-p_0$, $p_0$, $\frac{1}{p_0}$, etc. For instance, for $k\in\D{R}$ (before the deformation) one has
\begin{subequations} \label{eq:jump}
\begin{align}
S_1(\xi,t,k)&=
\begin{pmatrix}
1&\frac{\bar r(k)}{1-\abs{r(k)}^2}\,\frac{\delta_-^2(k)}{\delta(\omega^2k)\delta(\omega k)}\,\ee^{-2\ii\tilde t\Theta(\xi,\tilde k(k))}&0\\
0&1&0\\
0&0&1
\end{pmatrix}\!\!\!
\begin{pmatrix}
1&0&0\\
\frac{r(k)}{1-\abs{r(k)}^2}\,\frac{\delta(\omega^2k)\delta(\omega k)}{\delta_+^2(k)}\,\ee^{2\ii\tilde t\Theta(\xi,\tilde k(k))}&1&0\\
0&0&1
\end{pmatrix}\notag\\
&\qquad\text{for }k\in(-\tfrac{1}{p_0},-p_0)\cup(p_0,\tfrac{1}{p_0})
\shortintertext{and}
S_1(\xi,t,k)&=
\begin{pmatrix}
1&0&0\\
-r(k)\frac{\delta(\omega^2k)\delta(\omega k)}{\delta^2(k)}\,\ee^{2\ii\tilde t\Theta(\xi,\tilde k(k))}&1&0\\
0&0&1
\end{pmatrix}\!\!\!
\begin{pmatrix}
1&-\bar r(k)\frac{\delta^2(k)}{\delta(\omega^2k)\delta(\omega k)}\,\ee^{-2\ii\tilde t\Theta(\xi,\tilde k(k))}&0\\
0&1&0\\
0&0&1
\end{pmatrix}\notag\\
&\qquad\text{for }k\in(-\infty,-\tfrac{1}{p_0})\cup(\tfrac{1}{p_0},\infty).
\end{align}
\end{subequations}
Similarly for $k\in\omega\,\D{R}$ and $k\in\omega^2\,\D{R}$.
\renewcommand{\qed}{}
\end{proof}

This has two important consequences:
\begin{enumerate}[(i)]
\item
rough estimate (see \cite{BS-D, BKST}): $\mu^{(1)}(\xi,t,k)=(1\ \ 1\ \ 1)+\ord(\frac{1}{\sqrt t})$ as $t\to\infty$;
\item
when analyzing the $\ord(\frac{1}{\sqrt t})$ term in more details, the main contribution comes from the sum of (separate) contributions of each small cross (after the contour deformation) centered at $-\frac{1}{p_0}$, $-p_0$, $p_0$, $\frac{1}{p_0}$ and their images after rotation by $\omega$ and by $\omega^2$ (cf.~\cite{DZ}).
\end{enumerate}
Now let us analyze the consequences of properties (i) and (ii).

\begin{prop}[symmetry] \label{p42}
We have $\abs{r(-k)}=\abs{r(k)}$, at least for $k\in\bigl(\frac{\sqrt{5}-1}{2},\frac{2}{\sqrt{5}-1}\bigr)$.
\end{prop}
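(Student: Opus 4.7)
The plan is to read $\abs{r(-k)}=\abs{r(k)}$ off as a compatibility condition on the four real critical points $\pm p_0,\pm 1/p_0$ whose contributions together produce the leading $t^{-1/2}$ oscillation in Theorem~\ref{t4}, exploiting the fact that $u(x,t)$ is real-valued for real initial data $u_0$.

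First, using the parabolic cylinder model problem near each small cross centered at $k_\ast\in\{\pm p_0,\pm 1/p_0\}$, I would compute the explicit $t^{-1/2}$ correction to $\mu^{(1)}(\xi,t,\ee^{\ii\pi/6})$ produced by that cross (its images under rotation by $\omega$ and $\omega^2$ being automatic via Proposition~\ref{p2}). Because by \eqref{la1-2} the phase $\Theta(\xi,\tilde k(k))$ coincides with the one driving the Camassa--Holm asymptotics, the local parabolic-cylinder computation of \cite{BS-D,BKST} transfers verbatim and yields at each $k_\ast$ a term of the form $\tfrac{a(k_\ast)}{\sqrt t}\ee^{\pm 2\ii\tilde t\Theta(\xi,\tilde k(k_\ast))+\ii\phi(k_\ast)}$ with $a(k_\ast)^2\propto -\log\bigl(1-\abs{r(k_\ast)}^2\bigr)$.

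Second, since $\tilde k(p_0)=-\tilde k(-p_0)=-\tilde k(1/p_0)=\tilde k(-1/p_0)=\kappa_0$ and $\Theta$ is odd in $\tilde k$, the four terms regroup into two bundles with stationary phases $\pm 2\tilde t\Theta(\xi,\kappa_0)$. Substituting into the parametric formula \eqref{u-y} of Theorem~\ref{u-repr} and demanding that $u(x,t)$ be real forces the two bundles to have equal modulus. The known symmetry (S4), $\abs{r(1/k)}=\abs{r(k)}$, pairs the $p_0$-contribution with the $1/p_0$-contribution and $-p_0$ with $-1/p_0$, which collapses the modulus-matching equation to the single scalar identity $\abs{r(p_0)}=\abs{r(-p_0)}$.

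Finally, as $\xi$ ranges over $(0,2)$ the critical point $p_0(\xi)$ sweeps $\bigl(\tfrac{\sqrt 5-1}{2},1\bigr)$, yielding the identity on this subinterval; one further application of (S4) as $k\mapsto 1/k$ extends it to $\bigl(1,\tfrac{2}{\sqrt 5-1}\bigr)$, completing the claim. The hard part will be the bookkeeping of the diagonal factor $\tilde\delta$ from \eqref{tilde.delta}--\eqref{delta}: I must verify that the phases $\arg\delta(\omega^j k_\ast)$ it contributes at each critical point, together with the $\arg\Gamma$-phase from the parabolic-cylinder solution, pair up symmetrically under $k_\ast\mapsto -k_\ast$, so that the reality condition on the $t^{-1/2}$ term reduces cleanly to an equality of $\abs{r}$ rather than to a mixed identity involving $\arg r$ as well.
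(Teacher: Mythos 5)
Your central mechanism does not produce the constraint you need. You propose to derive $\abs{r(-k)}=\abs{r(k)}$ by ``demanding that $u(x,t)$ be real'' in the $t^{-1/2}$ term assembled from the four real-axis crosses. But reality of that term is automatic, for any reflection coefficient, because the crosses pair off into complex conjugates: $\tilde k(-1/p_0)=\tilde k(p_0)=\kappa_0$ while $\tilde k(-p_0)=\tilde k(1/p_0)=-\kappa_0$, and the Schwarz symmetry (S1) of the RH problem forces the local contributions at $\tilde k=\kappa_0$ and $\tilde k=-\kappa_0$ to be conjugate. Concretely, the sum of the four contributions at $k=\ee^{\ii\pi/6}$ has the form $\Re\bigl\lbrace -\ii(\beta+\beta^*)\delta_{\ast}^{-2}/(\tfrac{\ii}{2}+\kappa_0)\bigr\rbrace$ with $\beta\propto r(-1/p_0)$ and $\beta^*\propto r(p_0)$ (see \eqref{mu123}); this is real whatever the moduli $\abs{r(-1/p_0)}$ and $\abs{r(p_0)}$ are, so no modulus-matching equation ever arises. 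Your ``two bundles'' are conjugates by construction, not by virtue of a property of $r$. (Your final step --- extending from $p_0\in(\tfrac{\sqrt5-1}{2},1)$ to the full interval via (S4) --- is fine, but it rests on this missing foundation.)

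The identity actually comes from a different, exact constraint that your argument never uses: by the explicit formula \eqref{mu-0}, $\mu_1\mu_3=\mu_2^2$ at $k=\ee^{\ii\pi/6}$ for \emph{all} $(y,t)$, since $\mu=(q\ee^{-\int},\,q,\,q\ee^{+\int})$ there. Combining this multiplicative identity with the rough estimate $\mu_j=\tilde\delta_{jj}(\xi,\ee^{\ii\pi/6})\bigl(1+\ord(t^{-1/2})\bigr)$ forces $\tilde\delta_{11}\tilde\delta_{33}\tilde\delta_{22}^{-2}=1$, i.e., $\delta^3(\xi,\ee^{\ii\pi/6})\delta^{-3}(\xi,\omega\ee^{\ii\pi/6})=1$ identically in $\xi$. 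Writing this quantity via \eqref{delta} as the exponential of an integral over $(p_0(\xi),1/p_0(\xi))$ of $g(s)/(s^2-\ii s-1)$ with $g(s)\propto\phi(s)-\phi(-s)$, $\phi(s)=\log(1-\abs{r(s)}^2)$, and differentiating in $p_0$, the $\xi$-independence localizes the integral constraint at the endpoint and yields $g(p_0)=0$, i.e., $\abs{r(p_0)}=\abs{r(-p_0)}$ pointwise. This endpoint-differentiation step is the key idea, and it has no counterpart in your proposal; without it (or the exact identity $\mu_1\mu_3=\mu_2^2$ that feeds it) the claim is not reachable along the route you describe.
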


\begin{proof}
Looking at \eqref{mu-0} we notice that $\mu_1(\,\cdot\,,\,\cdot\,,\ee^{\frac{\ii\pi}{6}})\mu_3(\,\cdot\,,\,\cdot\,,\ee^{\frac{\ii\pi}{6}})=\mu_2^2(\,\cdot\,,\,\cdot\,,\ee^{\frac{\ii\pi}{6}})$. On the other hand, property (i) above implies that $\mu_j(\,\cdot\,,t,\ee^{\frac{\ii\pi}{6}})=\tilde\delta_{jj}(\xi,\ee^{\frac{\ii\pi}{6}})\bigl(1+\ord(\frac{1}{\sqrt t})\bigr)$. It follows that $\tilde\delta_{11}(\xi,\ee^{\frac{\ii\pi}{6}})\tilde\delta_{33}(\xi,\ee^{\frac{\ii\pi}{6}})\tilde\delta_{22}^{-2}(\xi,\ee^{\frac{\ii\pi}{6}})=1$, hence by \eqref{tilde.delta}
\[
\delta^3(\xi,\ee^{\frac{\ii\pi}{6}})\delta^{-3}(\xi,\omega\ee^{\frac{\ii\pi}{6}})=1
\]
for all $\xi$ and thus is independent of $\xi$ (in the considered range of $\xi$).

On the other hand, from the definition \eqref{delta} of $\delta$ we have
\begin{equation}  \label{delta-relation}
\delta(\xi,\ee^{\frac{\ii\pi}{6}})\delta^{-1}(\xi,\ee^{\frac{5\ii\pi}{6}})=\exp\left\lbrace\frac{\ii\sqrt{3}}{2\pi}\left(\int_{-\frac{1}{p_0(\xi)}}^{-p_0(\xi)}+\int_{p_0(\xi)}^{\frac{1}{p_0(\xi)}}\right)\frac{\phi(s)}{s^2-\ii s-1}\,\dd s\right\rbrace,
\end{equation}
where $\phi(s):=\log(1-\abs{r(s)}^2)$. Noticing that $r(k)=\bar r(1/k)$ for $k\in\D{R}$ (this follows from property (S4) in Proposition \ref{p2}), then $\phi(s)=\phi(1/s)$ and the r.h.s.\ of \eqref{delta-relation} takes the form
\[
G(p_0(\xi)):=\exp\left\lbrace\int_{p_0(\xi)}^{\frac{1}{p_0(\xi)}}\frac{g(s)}{s^2-\ii s-1}\,\dd s\right\rbrace,
\]
where $g(s):=\frac{\ii\sqrt{3}}{2\pi}(\phi(s)-\phi(-s))$. Now, since the l.h.s.\ of \eqref{delta-relation} is independent of $\xi$, we have
\[
0=\frac{\dd}{\dd p_0}G(p_0)=-G(p_0)g(p_0)\frac{2\ii p_0}{(p_0^2-1)^2+p_0^2}\,.
\]
Therefore, $g(p_0)=0$ for all $p_0$ corresponding to $\xi\in(0,2)$, i.e., for $p_0\in(\frac{\sqrt{5}-1}{2},1)$, and thus, by symmetry, also for $p_0\in\bigl(\frac{\sqrt{5}-1}{2},\frac{2}{\sqrt{5}-1}\bigr)$, hence $\abs{r(s)}=\abs{r(-s)}$ for all $s\in\bigl(\frac{\sqrt{5}-1}{2},\frac{2}{\sqrt{5}-1}\bigr)$.
\end{proof}

Property (ii) implies that
\begin{equation}   \label{eq:mu-as}
\mu(\ee^{\frac{\ii\pi}{6}})=\begin{pmatrix}1&1&1\end{pmatrix}\biggl(\sum_{j=1}^{12}M^{(j)}(\ee^{\frac{\ii\pi}{6}})-11\cdot I\biggr)\tilde\delta(\ee^{\frac{\ii\pi}{6}})\bigl(I+\osmall(1)\bigr)\quad\text{as }t\to\infty,
\end{equation}
where $M^{(j)}(k)$ are the solutions of the $3\times 3$ matrix RH problems on the crosses centered at $-\frac{1}{p_0}$, $-p_0$, etc. 

Particularly, for $M^{(1)}(k)$ associated with the cross centered at $-\frac{1}{p_0}$, the factors in the jump matrix \eqref{eq:jump} can be approximated, as $t\to\infty$, as follows (cf.~\cite{BS-D,BKST}):
\begin{equation}   \label{approx.factors}
\delta^{\pm 2}(k)\,\delta^{\mp 1}(\omega^2k)\,\delta^{\mp 1}(\omega k)\,\ee^{\mp 2\ii\tilde t\Theta(\xi,\tilde k(k))}\approx\delta_{\ast}^{\pm 2}\cdot(-\hat k)^{\pm 2\ii h_0}\ee^{\mp\ii\hat k^2/2},
\end{equation}
where 
\[
h_0=-\frac{1}{2\pi}\log(1-\abs{r(p_0)}^2),
\]
and where the scaled spectral parameter $\hat k$ is defined by
\begin{align*}
\hat k&=-\bigl(\tilde k(k)-\kappa_0\bigr)\sqrt{c\tilde t}
\shortintertext{with}
c=c(\xi)&=\frac{32\kappa_0(\xi)(3-4\kappa_0^2(\xi))}{(1+4\kappa_0^2(\xi))^3}\,.
\end{align*}
Moreover, the constant (w.r.t.~$\hat k$) factor $\delta_{\ast}$ is as follows:
\begin{subequations}
\begin{align}
\delta_{\ast}&=(c\tilde t)^{-\frac{\ii h_0}{2}}\left(\frac{p_0}{1-p_0^2}\right)^{\ii h_0}\ee^{\ii\,\frac{16\kappa_0^3}{(1+4\kappa_0^2)^2}\,\tilde t}\,\ee^{\ii\chi_0},
\shortintertext{where}
\chi_0&=-\frac{1}{2\pi}\left(\int_{-\frac{1}{p_0}}^{-p_0}+\int_{p_0}^{\frac{1}{p_0}}\right)\log\left(s+\frac{1}{p_0}\right)\dd\log(1-\abs{r(s)}^2)\notag\\
&\quad-\frac{1}{2\pi}\left(\int_{-\frac{1}{p_0}}^{-p_0}+\int_{p_0}^{\frac{1}{p_0}}\right)\log(1-\abs{r(s)}^2)\,\frac{2s-\frac{1}{p_0}}{s^2-\frac{s}{p_0}+\frac{1}{p_0^2}}\,\dd s.
\end{align}
\end{subequations}
For $k$ near $-p_0$, the symmetry $\delta(k)=\overline{\delta(1/\bar k)}$ and \eqref{approx.factors} yield
\begin{equation}   \label{eq:approx.factors}
\delta^{\pm 2}(k)\,\delta^{\mp 1}(\omega^2k)\,\delta^{\mp 1}(\omega k)\,\ee^{\mp 2\ii\tilde t\Theta(\xi,\tilde k(k))}\approx\bar\delta_{\ast}^{\pm 2}\cdot\hat k^{\mp2\ii h_0}\ee^{\pm\ii\hat k^2/2},
\end{equation}
where now 
\[
\hat k=-\bigl(\tilde k(k)+\kappa_0\bigr)\sqrt{c\tilde t}.
\]
In turn, the symmetry $\delta(k)=\overline{\delta(-\bar k)}$ (which follows from Proposition \ref{p42}) and the property $\tilde k(-1/k) = \tilde k(k)$ imply that \eqref{approx.factors} holds also for the cross near $p_0$ while \eqref{eq:approx.factors} holds for the cross near $1/p_0$.

Conjugating out the constant factors in \eqref{approx.factors} and \eqref{eq:approx.factors}, the resulting problems on the crosses (in the $\hat k$ plane) become RH problems whose solutions are given in terms of parabolic cylinder functions \cite{DZ,BKST,BS-D}. Particularly, $M^{(1)}(k)\approx\Delta_{\one}\hat M^{(1)}\Delta_{\one}^{-2}$ with $\Delta_{\one}=\diag\accol{\delta_{\ast},\delta_{\ast}^{-1},1}$, where the large-$\hat k$ behavior of $\hat M^{(1)}(\hat k)$ is given by 
\begin{align*}
\hat M^{(1)}(\hat k)&=I+\frac{\hat M_1}{k}+\ord(\hat k^{-2}),
\shortintertext{where}
\hat M_1&=
\begin{pmatrix}
0&\ii\bar\beta&0\\
-\ii\beta&0&0\\
0&0&0\\
\end{pmatrix}
\shortintertext{with}
\beta&=\frac{r(-1/p_0)\Gamma(-\ii h_0)h_0}{\sqrt{2\pi}\,\ee^{\ii\pi/4}\ee^{-\pi h_0/2}}\,.
\end{align*}
Here $\Gamma$ is the Euler Gamma function.

Recalling the relationship $\hat k=-\bigl(\tilde k(k)-\kappa_0\bigr)\sqrt{c\tilde t}$, the evaluation of $M^{(1)}(\ee^{\ii\pi/6})$ as $t\to\infty$ reduces to the following:
\begin{subequations}  \label{eq:m-onetwo}
\begin{align} \label{M1}
M^{(1)}(\ee^{\ii\pi/6})&\simeq\Delta_{\one}\biggl(I-\frac{\hat M_1}{\sqrt{c\tilde t}\,\bigl(\tilde k(\ee^{\ii\pi/6})-\kappa_0\bigr)}\biggr)\Delta_{\one}^{-1}\notag\\
&=I+\frac{\check M_1}{\sqrt{c\tilde t}\,(\frac{\ii}{2}+\kappa_0)}\,,
\shortintertext{where}  \label{M1check}
\check M_1&=\Delta_{\one}\hat M_1\Delta_{\one}^{-1}
=\begin{pmatrix}
0&\ii\bar\beta\delta_{\ast}^2&0\\
-\ii\beta\delta_{\ast}^{-2}&0&0\\
0&0&0
\end{pmatrix}.
\shortintertext{Similarly,}\label{M2}
M^{(2)}(\ee^{\ii\pi/6})&\simeq\Delta_{\one}\left(I-\frac{-\overline{\hat M_1}}{\sqrt{c\tilde t}(\tilde k(\ee^{\ii\pi/6})+\kappa_0)}\right)\Delta_{\one}^{-1}\notag\\
&=I+\frac{\overline{\check M_1}}{\sqrt{c\tilde t}(-\frac{\ii}{2}+\kappa_0)}\,.
\end{align}
\end{subequations}
For $M^{(3)}$ and $M^{(4)}$, which correspond to the crosses at $p_0$ and $1/p_0$,  respectively, the scaled spectral parameters are $\hat k =-(\tilde k(k)-\kappa_0)\sqrt{c\tilde t}$ and $\hat k =-(\tilde k(k)+\kappa_0)\sqrt{c\tilde t}$, respectively. Hence, for $M^{(3)}(\ee^{\ii\pi/6})$ and $M^{(4)}(\ee^{\ii\pi/6})$ one has expressions similar to \eqref{M1} and \eqref{M2}, respectively, with $\beta$ replaced in \eqref{M1check} by
\[
\beta^*=\frac{r(p_0)\Gamma(-\ii h_0)h_0}{\sqrt{2\pi}\,\ee^{\ii\pi/4}\ee^{-\pi h_0/2}}\,.
\]

For the crosses centered along $\omega\,\D{R}$ at $-\omega/p_0$, $-\omega p_0$, $\omega p_0$ and $\omega/p_0$ (denote the solutions of the corresponding RH problems by $M^{(j)}$, $j=5,\dots,8$), one applies the symmetries of Proposition \ref{p2}, which gives the following. For $M^{(5)}(k)$ one has $M^{(5)}(k)\approx\Delta_{\two}\hat M^{(5)}(\hat k)\Delta_{\two}^{-2}$ with $\Delta_{\two}=\diag\accol{\delta_{\ast}^{-1},1,\delta_{\ast}}$, where now
\[
\hat k=-\bigl(\tilde k(\omega^2k)-\kappa_0\bigr)\sqrt{c\tilde t}.
\]
The large-$\hat k$ behavior of $\hat M^{(5)}(\hat k)$ is given by 
\begin{align*}
&\hat M^{(5)}(\hat k)=I+\frac{\hat M_5}{\hat k}+\ord(\hat k^{-2})
\shortintertext{with}
&\hat M_5=\begin{pmatrix}
0&0&-\ii\beta\\
0&0&0\\
\ii\bar\beta&0&0\\
\end{pmatrix}.
\end{align*}
Therefore,
\begin{subequations}  \label{eq:m-fivesix}
\begin{align}  \label{M5}
M^{(5)}(\ee^{\ii\pi/6})&\simeq\Delta_{\two}\left(I-\frac{\hat M_5}{\sqrt{c\tilde t}(\tilde k(\omega^2\ee^{\ii\pi/6})-\kappa_0)}\right)\Delta_{\two}^{-1}\notag\\
&\quad=I+\frac{\check M_5}{\sqrt{c\tilde t}(-\frac{\ii}{2}+\kappa_0)}\,,
\shortintertext{where}  \label{M5check}
\check M_5&=\Delta_{\two}\hat M_5\Delta_{\two}^{-1}
=\begin{pmatrix}
0&0&-\ii\beta\delta_{\ast}^{-2}\\
0&0&0\\
\ii\bar\beta\delta_{\ast}^2&0&0
\end{pmatrix}
\shortintertext{Similarly, for $k$ near $-\omega p_0$, one introduces $\hat k=-(\tilde k(\omega^2k)+\kappa_0)\sqrt{c\tilde t}$, which leads to} \label{M6}
M^{(6)}(\ee^{\ii\pi/6})&\simeq I+\frac{\overline{\check M_5}}{\sqrt{c\tilde t}(\frac{\ii}{2}+\kappa_0)}\,.
\end{align}
\end{subequations}
For $M^{(7)}$ and $M^{(8)}$, the scaled spectral parameters are $\hat k =-(\tilde k(\omega^2 k)-\kappa_0)\sqrt{c\tilde t}$ and $\hat k =-(\tilde k(\omega^2 k)+\kappa_0)\sqrt{c\tilde t}$, respectively, and the remark above concerning $M^{(3)}(\ee^{\ii\pi/6})$ and $M^{(4)}(\ee^{\ii\pi/6})$ is valid: $\beta$ in \eqref{M5check} is to be replaced by $\beta^*$.

For the crosses centered along $\omega^2\,\mathbb R$ at $-\omega^2/p_0$, $-\omega^2p_0$, $\omega^2p_0$, and $\omega^2/p_0$ (denote the solutions of the corresponding RH problems by $M^{(j)}$, $j=9,\dots,12$), introducing $\Delta_{\three}=\diag\accol{1,\delta_{\ast},\delta_{\ast}^{-1}}$, $\hat k = -(\tilde k(\omega k)-\kappa_0)\sqrt{c\tilde t}$ for $M^{(9)}$ and $M^{(11)}$, and $\hat k = -(\tilde k(\omega k)+\kappa_0)\sqrt{c\tilde t}$ for $M^{(10)}$ and $M^{(12)}$, one has the following:
\begin{subequations}  \label{eq:m-nineten}
\begin{align}
M^{(9)}(\ee^{\ii\pi/6})&\simeq\Delta_{\three}\left(I-\frac{\hat M_9}{\sqrt{c\tilde t}(\tilde k(\omega\ee^{\ii\pi/6})-\kappa_0)}\right)\Delta_{\three}^{-1}\notag\\
&=I+\frac{\check M_9}{\sqrt{c\tilde t}(\frac{\ii}{2}+\kappa_0)}\,,
\shortintertext{where}
\check M_9&=\Delta_{\three}\hat M_9\Delta_{\three}^{-1}
=\begin{pmatrix}
0&0&0\\
0&0&\ii\bar\beta\delta_{\ast}^2\\
0&-\ii\beta\delta_{\ast}^{-2}&0
\end{pmatrix}
\shortintertext{and}
M^{(10)}(\ee^{\ii\pi/6})&\simeq I+\frac{\overline{\check M_9}}{\sqrt{c\tilde t}(-\frac{\ii}{2}+\kappa_0)}\,,
\end{align}
\end{subequations}
whereas for $M^{(11)}$ and $M^{(12)}$, the remark above concerning $M^{(3)}$ and $M^{(4)}$ applies. One has the same expressions \eqref{eq:m-nineten} with $\beta$ replaced by $\beta^*$.

Collecting \eqref{eq:m-onetwo}-\eqref{eq:m-nineten} and substituting into \eqref{eq:mu-as} gives
\begin{align}   \label{mu123}
\mu_1(\ee^{\ii\pi/6})\tilde\delta_{11}^{-1}(\ee^{\ii\pi/6})&\simeq 1+\frac{4}{\sqrt{c\tilde t}}\,\Re\frac{-\ii(\beta+\beta^*)\delta_{\ast}^{-2}}{\frac{\ii}{2}+\kappa_0}\,,\notag\\
\mu_2(\ee^{\ii\pi/6})\tilde\delta_{22}^{-1}(\ee^{\ii\pi/6})&\simeq 1+\frac{2}{\sqrt{c\tilde t}}\left(\Re\frac{-\ii(\beta+\beta^*)\delta_{\ast}^{-2}}{-\frac{\ii}{2}+\kappa_0}+\Re\frac{-\ii(\beta+\beta^*)\delta_{\ast}^{-2}}{-\frac{\ii}{2}+\kappa_0}\right)\,,\notag\\
\mu_3(\ee^{\ii\pi/6})\tilde\delta_{33}^{-1}(\ee^{\ii\pi/6})&\simeq 1+\frac{4}{\sqrt{c\tilde t}}\,\Re\frac{-\ii(\beta+\beta^*)\delta_{\ast}^{-2}}{-\frac{\ii}{2}+\kappa_0}\,.
\end{align}
Taking into account the symmetry $\delta(k)=\bar\delta(1/\bar k)$, from \eqref{tilde.delta} and \eqref{delta} we can calculate
\begin{align}
\tilde\delta_{11}(\ee^{\ii\pi/6})&=\exp\left\lbrace\frac{1}{4\pi}\left(\int_{-\frac{1}{p_0}}^{-p_0}+\int_{p_0}^{\frac{1}{p_0}}\right)\log(1-\abs{r(s)}^2)\left(\frac{1}{s^2+s\sqrt{3}+1}+\frac{1}{s^2+1}\right)\dd s\right\rbrace,\notag\\
\tilde\delta_{22}(\ee^{\ii\pi/6})&=1,\notag\\
\tilde\delta_{33}(\ee^{\ii\pi/6})&=\tilde\delta_{11}^{-1}(\ee^{\ii\pi/6}).
\end{align}
Thus, from \eqref{mu123} we have
\begin{equation}   \label{mu21}
\log\frac{\mu_2}{\mu_1}(\ee^{\ii\pi/6})\simeq-\log\tilde\delta_{11}(\ee^{\ii\pi/6})+\frac{8}{\sqrt{c\tilde t}(1+4\kappa_0^2)}\,\Re\accol{(\beta+\beta^*)\delta_{\ast}^{-2}}.
\end{equation}
Recalling the definitions of $c$, $\beta$ and $\delta_{\ast}$, and taking into account that $\abs{\beta}=\abs{\beta^*}=\sqrt{h_0}$, we get
\begin{equation}   \label{rel}
\frac{8}{\sqrt{c\tilde t}(1+4\kappa_0^2)}\,\Re\accol{(\beta+\beta^*)\delta_{\ast}^{-2}}=\frac{\tilde c_1}{\sqrt{t}}\,\cos(c_2t+c_3\log t+\tilde c_4),
\end{equation}
where
\begin{subequations}   \label{eq:tc14}
\begin{align}
\tilde c_1&=\left(\frac{16 h_0(1+4\kappa_0^2)}{3\kappa_0(3-4\kappa_0^2)}\right)^{\frac{1}{2}}\cos\bigl(\arg r(-p_0^{-1})-\arg r(p_0)\bigr)\\
c_2&=\frac{48\kappa_0^3}{(1+4\kappa_0^2)^2}\,,\qquad c_3=-h_0,\\
\tilde c_4&=\frac{\pi}{4}+\arg\Gamma(\ii\,h_0)-h_0\log\frac{48\kappa_0(3-4\kappa_0^2)(1-p_0^2)^2}{(1+4\kappa_0)^3p_0^2}\notag\\
&\quad-\frac{\arg r(-1/p_0)+\arg r(p_0)}{2}-2\chi_0.
\end{align}
\end{subequations}
Now the asymptotics of $u(x,t)$ can be calculated by differentiating \eqref{mu21}, \eqref{rel} with respect to $t$ (keeping $y$ fixed), setting $\kappa_0=\kappa_0(\xi)$ \eqref{kappa-0} in terms of $\xi=\frac{2y}{3t}$, so that $\frac{2y}{3t}=\frac{2(1-4\kappa_0^2)}{(1+4\kappa_0^2)^2}$ by \eqref{eq:tildek.critical}, and taking into account the change of variables $y\mapsto x$. Due to
\[
y=x+\log\tilde\delta_{11}(\ee^{\ii\pi/6})+\ord(t^{-1/2}),
\]
the latter results in an additional phase shift
\begin{align}
\Delta&=-2\kappa_0\log\tilde\delta_{11}(\ee^{\ii\pi/6})\notag\\
&=-\frac{\kappa_0}{2\pi}\left(\int_{-\frac{1}{p_0}}^{-p_0}+\int_{p_0}^{\frac{1}{p_0}}\right)\log(1-\abs{r(s)}^2)\left(\frac{1}{s^2+s\sqrt{3}+1}+\frac{1}{s^2+1}\right)\dd s.
\end{align}
The resulting formula is
\begin{equation}
u(x,t)=\frac{c_1}{\sqrt{t}}\,\sin(c_2t+c_3\log t+c_4)(1+\osmall(1)),
\end{equation}
where
\begin{subequations}   \label{eq:c14}
\begin{align}
c_1&=-\tilde c_1\left(c_2+\frac{\partial c_2}{\partial t}\right)=-\tilde c_1\left(c_2+\frac{\dd c_2}{\dd\kappa_0}\frac{\partial\kappa_0}{\partial t}\right)=-\tilde c_1\frac{6\kappa_0}{1+4\kappa_0^2}\notag\\
&=-\left(\frac{192 h_0\kappa_0}{(3-4\kappa_0^2)(1+4\kappa_0^2)}\right)^{\frac{1}{2}}\cos\bigl(\arg r(-p_0^{-1})-\arg r(p_0)\bigr),\\
c_2&=\frac{48\kappa_0^3}{(1+4\kappa_0^2)^2}\,,\qquad c_3=-h_0,\\
c_4&=\tilde c_4+\Delta.
\end{align}
\end{subequations}

\subsection{Asymptotics in other sectors}   \label{asy.other.sectors}

The asymptotics in other sectors as well as in transition zones connecting the sectors is determined by the corresponding signature tables which dictate appropriate deformations and rescalings. 

\subsubsection{Sector \emph{(iii)}}   \label{asy.sector.three}

For $-\frac{3}{8}<\frac{x}{t}<0$, i.e., for $-\frac{1}{4}<\xi<0$ the critical point equation \eqref{eq:varpi} has two solutions $\geq 1$, which gives in the $\tilde k$-plane four real critical points $\pm\kappa_0$, $\pm\kappa_1$:
\begin{subequations}   \label{kappa01}
\begin{align}   \label{kappa0}
\kappa_0(\xi)&=\left(\frac{\sqrt{1+4\xi}-1-\xi}{4\xi}\right)^{\frac{1}{2}},\\
\label{kappa1}
\kappa_1(\xi)&=\left(-\frac{\sqrt{1+4\xi}+1+\xi}{4\xi}\right)^{\frac{1}{2}}.
\end{align}
\end{subequations}
Applying the inverse map $\tilde k\leadsto k$ we get in the $k$-plane eight real critical points $\pm p_0$, $\pm\frac{1}{p_0}$, $\pm p_1$, $\pm\frac{1}{p_1}$:
\begin{subequations}   \label{p01}
\begin{align}   \label{p0}
p_0(\xi)&=-\kappa_0+\sqrt{\kappa_0^2+1}\,,\\
\label{p1}
p_1(\xi)&=-\kappa_1+\sqrt{\kappa_1^2+1}\,.
\end{align}
\end{subequations}
Thus, the signs of $\Im\Theta(\xi,\tilde k(k))$ near the real axis are distributed as seen in Figure~\ref{fig:signs.two}. There are eight critical points on $\D{R}$, and similarly, on $\omega\D{R}$ and $\omega^2\D{R}$. Eventually (as for the Camassa--Holm equation), this leads to the sum of two oscillating modes in the asymptotics.
\begin{figure}[ht]
\centering\includegraphics[scale=.7]{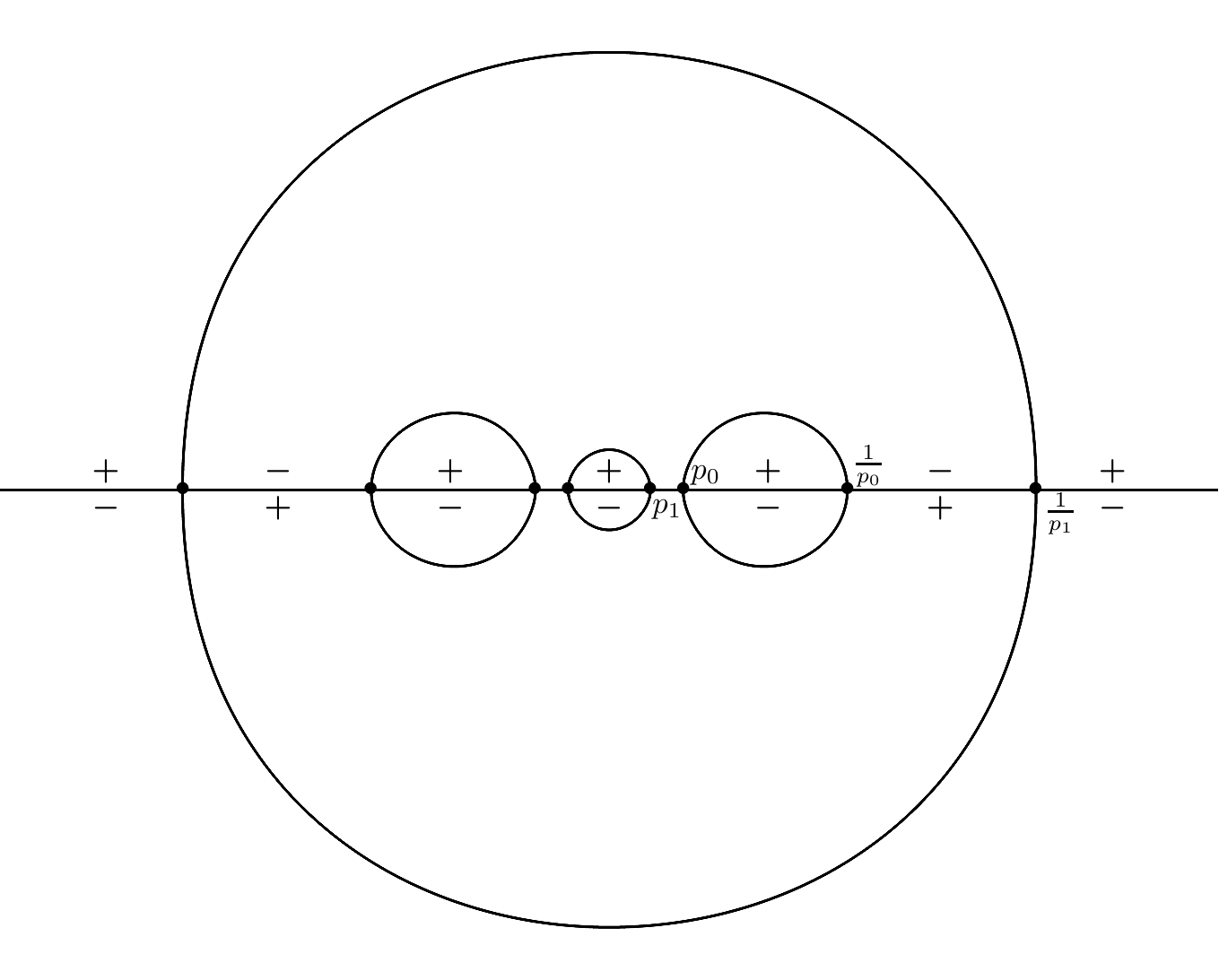}
\caption{Signs of $\Im\Theta(\xi,\tilde k(k))$ near $\D{R}$ for $-\frac{3}{8}<\frac{x}{t}<0$, i.e., $-\frac{1}{4}<\xi<0$.} 
\label{fig:signs.two}
\end{figure}

\subsubsection{Transition zone \emph{(iii)-(iv)}}\label{asy.transition.three.four}

At $\frac{x}{t}=-\frac{3}{8}$, i.e., at $\xi=-\frac{1}{4}$, the critical equation  \eqref{eq:varpi} has a solution of multiplicity $2$, which gives $\kappa_0=\kappa_1=\frac{\sqrt{7}-\sqrt{3}}{2}$. Thus, in the $k$-plane pairs of critical points collide: $p_0=p_1=\frac{\sqrt{7}-\sqrt{3}}{2}$, $\frac{1}{p_0}=\frac{1}{p_1}=\frac{\sqrt{7}+\sqrt{3}}{2}$, as seen in Figure~\ref{fig:transition.three.four}, 
\begin{figure}[ht]
\centering\includegraphics[scale=.7]{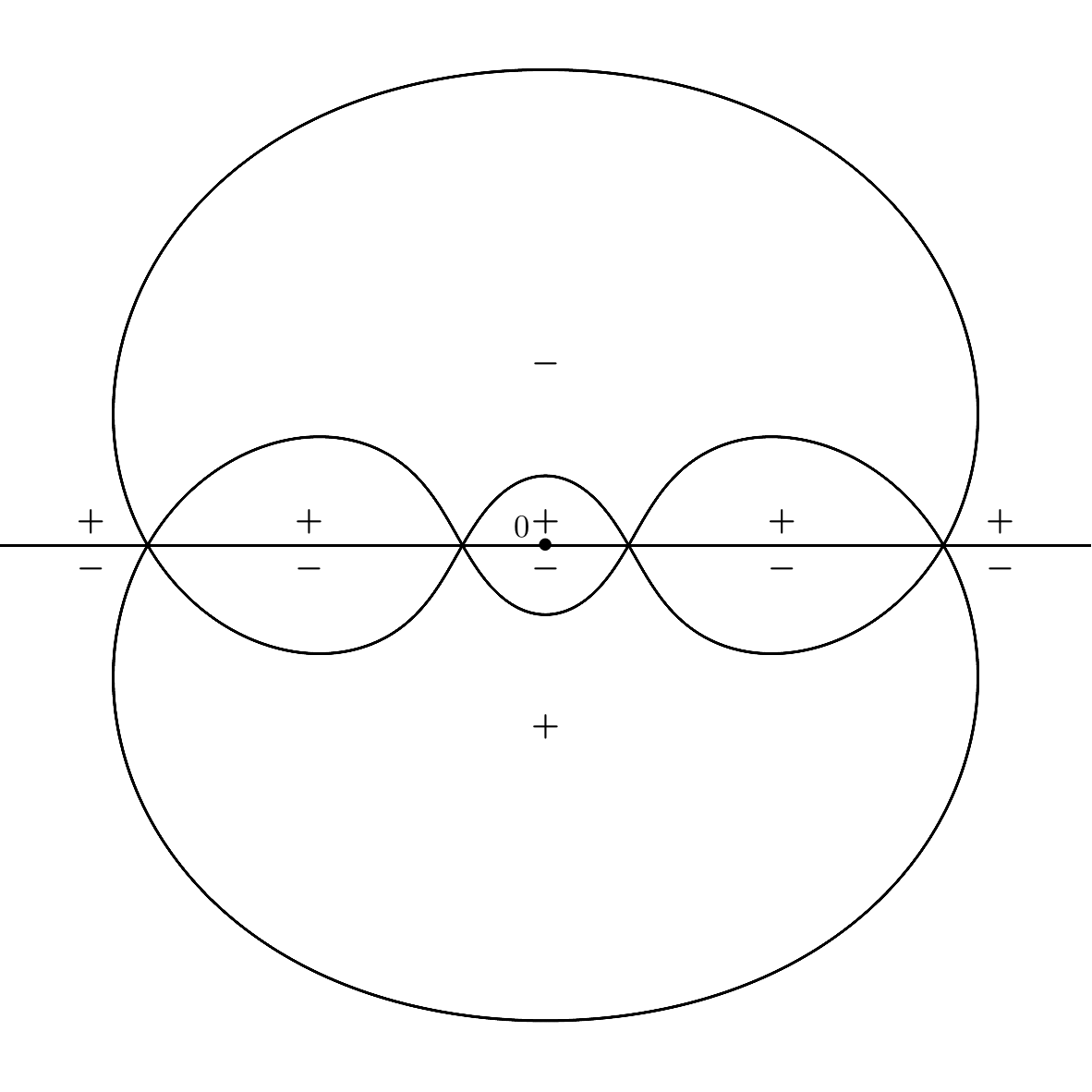}
\caption{Signs of $\Im\Theta(\xi,\tilde k(k))$ in the $k$-plane for $\frac{x}{t}=-\frac{3}{8}$, i.e., $\xi=-\frac{1}{4}$.} 
\label{fig:transition.three.four}
\end{figure}
which signifies the occurrence of a transition zone between the sectors (iii) and (iv), where the main asymptotic terms are expressed in terms of dedicated solutions of the Painlev\'e II equation (cf.~[3]). 

\subsubsection{Transition zone \emph{(i)-(ii)}}\label{asy.transition.one.two}

At $\frac{x}{t}=3$, i.e., at $\xi=2$, the only positive solution of \eqref{eq:varpi} is $\varpi=1$, hence $\kappa_0=0$ and pairs of critical points collide: $p_0=\frac{1}{p_0}=1$. There is a transition between sectors (i) and (ii) (see Figure~\ref{fig:transition.one.two}) which is similar to the previous one.
\begin{figure}[ht]
\centering\includegraphics[scale=1]{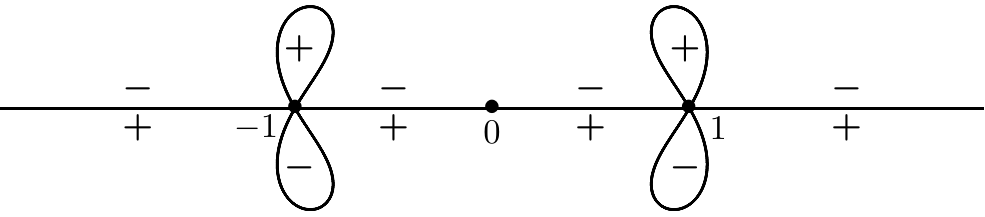}
\caption{Signs of $\Im\Theta(\xi,\tilde k(k))$ in the $k$-plane for $\frac{x}{t}=3$, i.e., $\xi=2$.} 
\label{fig:transition.one.two}
\end{figure}

\subsubsection{Sector \emph{(iv)}}   \label{asy.sector.four}

For $\frac{x}{t}<-\frac{3}{8}$, i.e., for $\xi<-\frac{1}{4}$, the critical equation \eqref{eq:varpi} has no real solution. Thus there are no real critical points as seen in Figures~\ref{fig:signs.fourth.sector.one}-\ref{fig:signs.fourth.sector.three} corresponding to sector (iv), which leads to a rapid decay of the solution in this sector.
\begin{figure}[ht]
\centering\includegraphics[scale=.6]{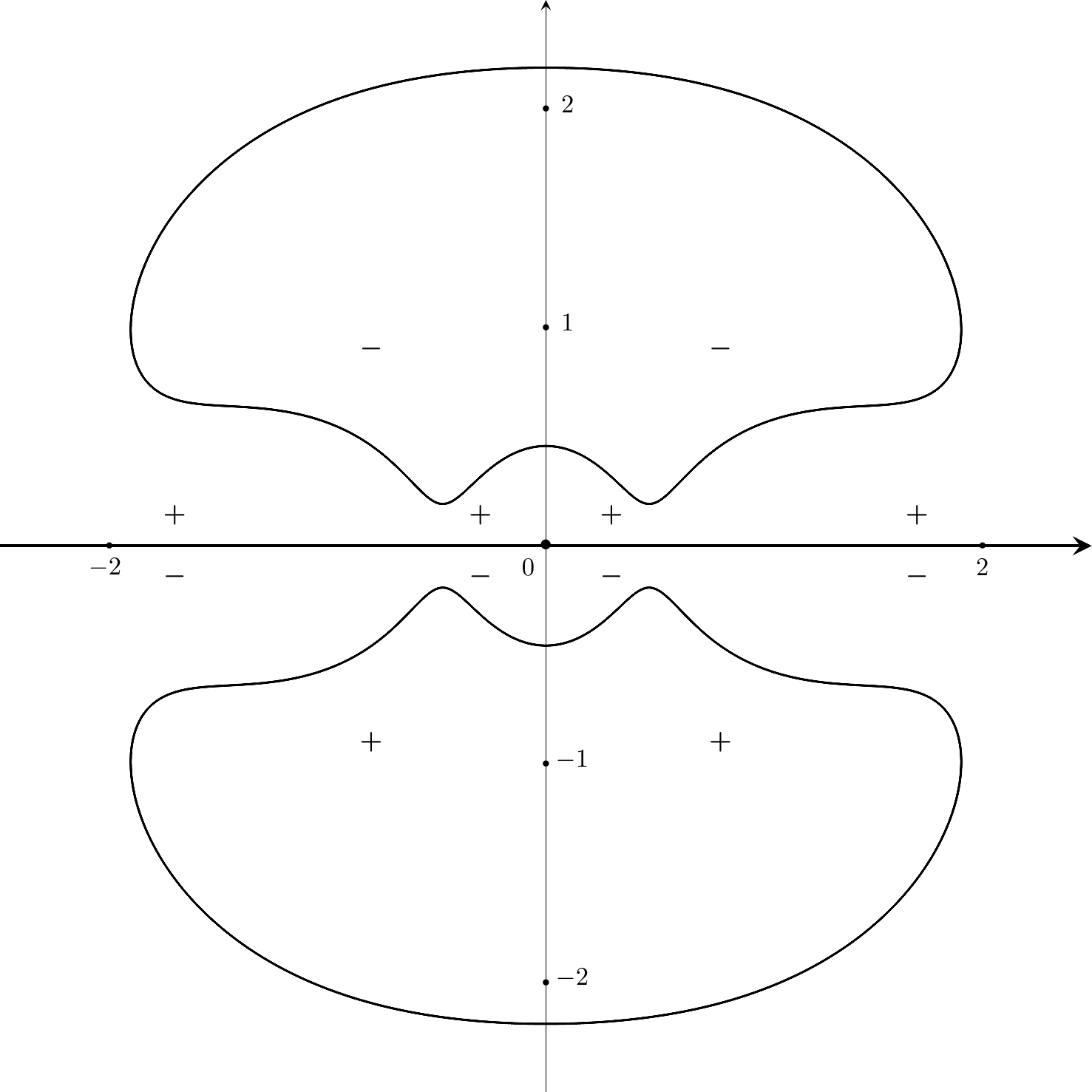}
\caption{Signs of $\Im\Theta(\xi,\tilde k(k))$ for $-1<\frac{x}{t}<-\frac{3}{8}$, i.e., $-\frac{2}{3}<\xi<-\frac{1}{4}$.} 
\label{fig:signs.fourth.sector.one}
\end{figure}

\begin{figure}[ht]
\centering\includegraphics[scale=.7]{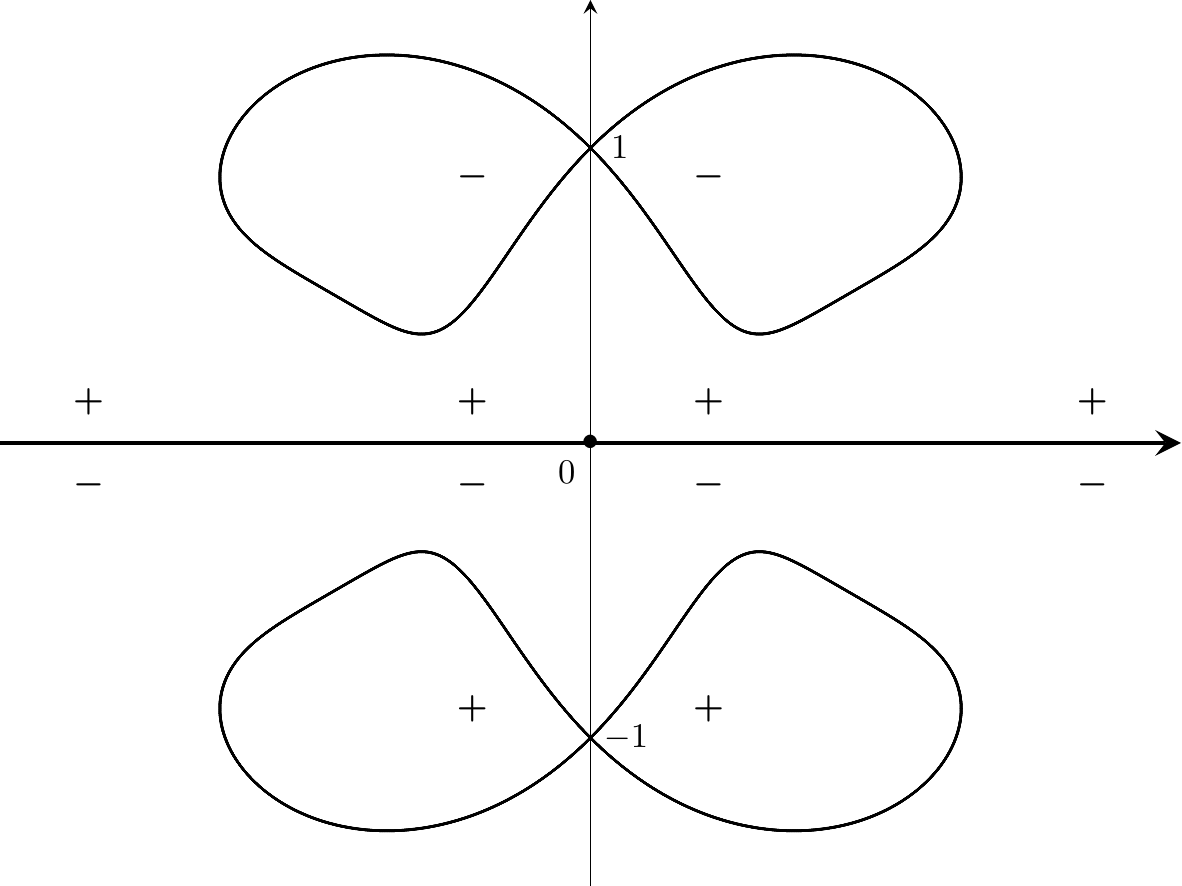}
\caption{Signs of $\Im\Theta(\xi,\tilde k(k))$ in the $k$-plane for $\frac{x}{t}=-1$, i.e., $\xi=-\frac{2}{3}$.} 
\label{fig:signs.fourth.sector.two}
\end{figure}

\begin{figure}[ht]
\centering\includegraphics[scale=.7]{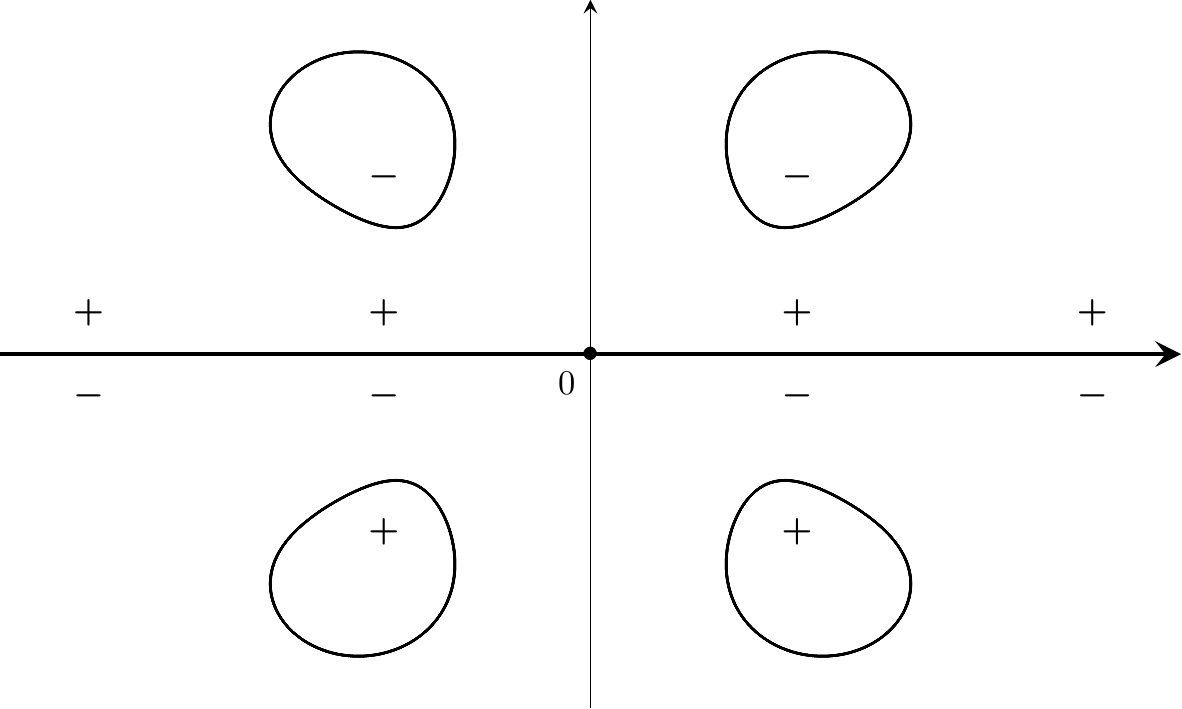}
\caption{Signs of $\Im\Theta(\xi,\tilde k(k))$ in the $k$-plane for $\frac{x}{t}<-1$, i.e., $\xi<-\frac{2}{3}$.} 
\label{fig:signs.fourth.sector.three}
\end{figure}

\subsubsection{Sector \emph{(i)}}   \label{asy.sector.one}

For $\frac{x}{t}>3$, i.e., for $\xi>2$, the critical equation \eqref{eq:varpi} has no solution $\geq 1$ and thus there are no real critical points, which leads to a rapid decay of the solution in this sector.


\end{document}